\documentclass[a4paper,UKenglish,cleveref, autoref, thm-restate]{lipics-v2021}

\nolinenumbers

\usepackage[utf8]{inputenc}
\usepackage{amsmath, amssymb}
\usepackage{hyperref}
\usepackage[table]{xcolor}
\usepackage{graphicx}
\usepackage[most]{tcolorbox}
\usepackage[ruled,vlined]{algorithm2e}

\usepackage{array, makecell}  
\definecolor{lightgray}{gray}{0.9}

\usepackage{etoolbox}

\usepackage{wrapfig}

\newcommand{\mypp}{\medskip\noindent\textbf}

\newcommand{\inlinecomment}[1]{\hspace{0.5em}{\textcolor{orange}{\texttt{// #1}}}}

\SetKwBlock{Upon}{Upon}{}

\usepackage{etoolbox}
\newcommand{\showme}{yes}

\newcommand{\fullversion}[1]{%
  \ifdefstring{\showme}{yes}{#1}{}%
}
\newcommand{\shortversion}[1]{%
  \ifdefstring{\showme}{no}{#1}{}%
}


\newcommand{\DM}[1]{{}}
\newcommand{\Sadoghi}[1]{}
\newcommand{\Dakai}[1]{}
\newcommand{\Suyash}[1]{}

\newcommand{\Ctail}{Carry-the-Tail\xspace}
\newcommand{\Carry}{Carry\xspace}

\newcommand{\QC}[1]{\ensuremath{\mathsf{QC}}(#1)}
\newcommand{\View}[1]{\ensuremath{{\rm view}}(#1)}

\newtoggle{V1}

\newcommand{\Versioned}[2]{\iftoggle{V1}{#1}{{\color{blue} #2}}}


\shortversion{
\title{Brief Announcement: Carry the Tail in Consensus Protocols}
\titlerunning{Brief Announcement: Carry the Tail in Consensus Protocols}}

\fullversion{
\title{Carry the Tail in Consensus Protocols}
\titlerunning{Carry the Tail in Consensus Protocols}}

\author{Suyash Gupta}{University of Oregon \and \url{https://gupta-suyash.github.io/} }{suyash@uoregon.edu}{https://orcid.org/0000-0002-3240-1840}{}

\author{Dakai Kang}{University of California, Davis \and \url{https://dakaikang.github.io/} }{dakang@ucdavis.edu}{https://orcid.org/0000-0002-7867-3681}{}

\author{Dahlia Malkhi}{University of California, Santa Barbara \and \url{https://malkhi.com/} }{dahliamalkhi@ucsb.edu}{https://orcid.org/0000-0002-7038-7250}{}

\author{Mohammad Sadoghi}{University of California, Davis \and \url{https://expolab.org/} }{msadoghi@ucdavis.edu}{https://orcid.org/0000-0003-2779-6080}{This work is partially funded by NSF Award Number 2245373.}

\authorrunning{S. Gupta, D. Kang, D. Malkhi and M. Sadoghi}

\Copyright{Suyash Gupta and Dakai Kang and Dahlia Malkhi and Mohammad Sadoghi}

\ccsdesc[500]{Theory of computation~Distributed algorithms}

\keywords{Consensus, Blockchain, BFT}

\shortversion{
\relatedversiondetails[cite={carryfull}]{Full Version}{arxiv.org/xxxx}}

\shortversion{
\category{Brief Announcement}

\EventEditors{Dariusz R. Kowalski}
\EventNoEds{1}
\EventLongTitle{39th International Symposium on Distributed Computing (DISC 2025)}
\EventShortTitle{DISC 2025}
\EventAcronym{DISC}
\EventYear{2025}
\EventDate{October 27--31, 2025}
\EventLocation{Berlin, Germany}
\EventLogo{}
\SeriesVolume{356}
\ArticleNo{54}}

\begin{document}

\maketitle

\begin{abstract}

\begin{wrapfigure}{l}{2cm}
         \includegraphics[width=\linewidth]{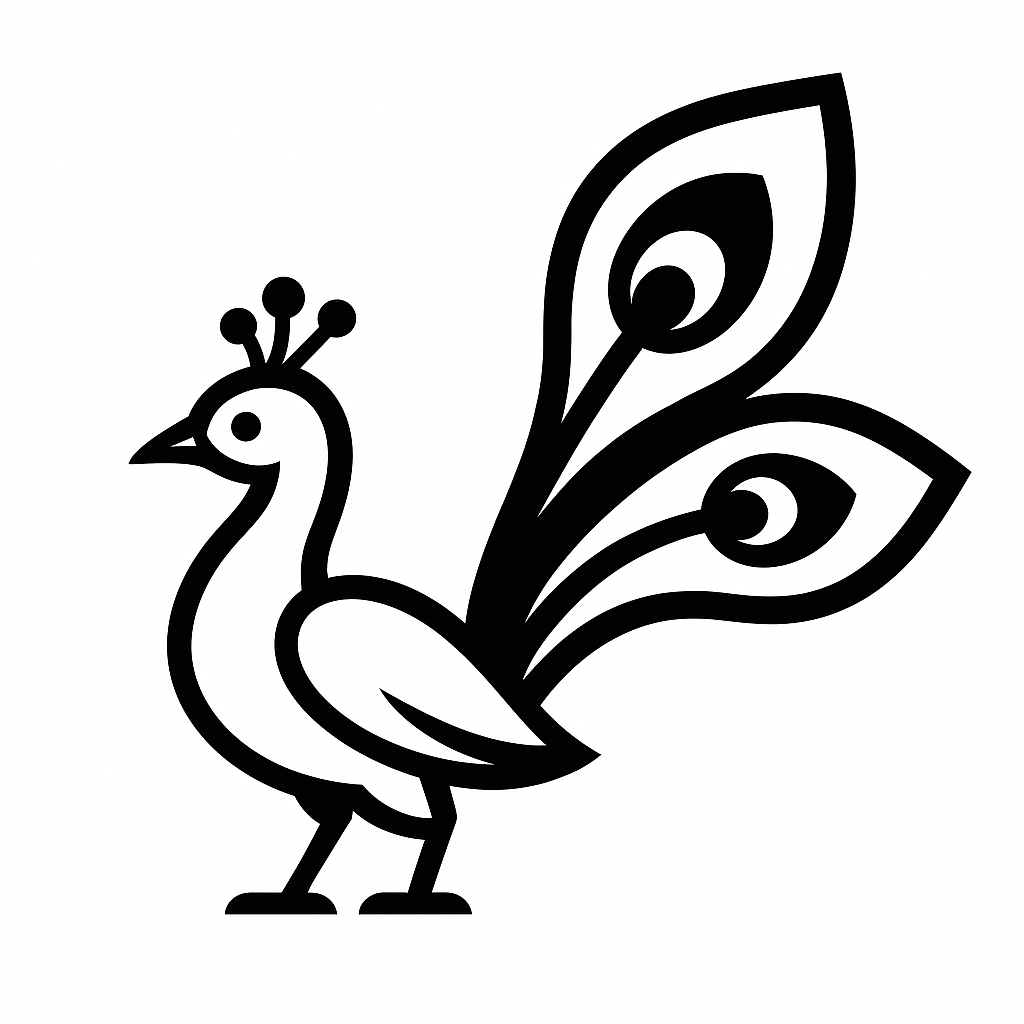}
         \label{fig:peacock}
\end{wrapfigure}   

We present \textbf{\Ctail}, the first deterministic atomic broadcast protocol in partial synchrony that, after GST, guarantees a constant fraction of commits by non-faulty leaders against tail-forking attacks, and maintains optimal, worst case quadratic communication under a cascade of faulty leaders. The solution also guarantees linear amortized communication, i.e., the steady-state is linear.

Prior atomic broadcast solutions achieve quadratic word communication complexity in the worst case. However,
they face a significant degradation in throughput under tail-forking attack. 
Existing solutions to tail-forking attacks require either quadratic communication steps or computationally-prohibitive SNARK generation.  

The key technical contribution is \textbf{\Carry}, a practical drop-in mechanism for streamlined protocols in the HotStuff family. \Carry guarantees good performance against tail-forking and removes most leader-induced stalls, while
retaining linear traffic and protocol simplicity.  
\end{abstract}

\fullversion{
\section{Introduction}

Streamlined Byzantine Fault-Tolerant (BFT) consensus protocols, which lie at the core of modern decentralized systems and blockchain applications,
follow HotStuff's~\cite{Yin2019HotStuffBC} pioneering design (``HS-like'') to reach an agreement on the order of executing client transactions among a system of replicas~\cite{hotshot,Chan2023SimplexCA,Chan2020StreamletTS,diembft-hotstuff,Jalalzai2020FastHotStuffAF,Kang2024HotStuff1LC,Malkhi2023HotStuff2OT,Yin2019HotStuffBC}.
This HS-like design is attractive for the following four reasons:
\begin{itemize}
  \item \textbf{Conceptual simplicity:} one block per leader and a chain of quorum certificates (QCs; each certificate formed with the support of a quorum of replicas) guarantees that a transaction has committed.
  \item \textbf{Responsive liveness:} progress after GST without any extra rounds for view-change.
  \item \textbf{Linear communication:} an $O(n)$ word cost in the normal (no failures) case.
  \item \textbf{Censorship protection:} new leader per round prevents clients from being censored.
\end{itemize}

\noindent
Yet, frequent leader rotation makes HS-like protocols fragile whenever a leader is \emph{slow, selfish, or Byzantine}. 
BeeGees~\cite{Giridharan2022BeeGeesSA} shows that in the absence of consecutive honest leaders, bad leaders can cause throughput to collapse through \textit{Tail-Forking} attacking. 
Existing remedial solutions
sacrifice HotStuff's hallmark efficiency by requiring expensive proofs ~\cite{Giridharan2022BeeGeesSA,Jalalzai2025MonadBFTFR,Jalalzai2023VBFTVB}:
an incoming leader needs to collect a quorum of votes for the previous tail (block) or \textbf{prove} that it cannot collect them.
Such a proof needs quadratic communication complexity or requires generating computationally-prohibitive SNARKs.

\mypp{Our Contribution: \Carry}

We introduce \textbf{\Carry}—a \emph{practical}, drop-in mechanism for any HS-like protocol that preserves linear traffic and restores progress under hostile or sluggish leadership.

\smallskip
To explain how the \Carry{} mechanism resolves tail-forking attacking while preserving linear communication complexity, we need to revisit why HS-like protocols require consecutive honest leaders, apart from the block proposer (say $P$), to commit a proposal. 
For example, for HotStuff-2~\cite{Malkhi2023HotStuff2OT}, given a system of $3f{+}1$ replicas, where at most $f$ replicas are Byzantine, two consecutive honest leaders allow HotStuff-2 to form a chain of two QCs, which guarantees the following:
\begin{enumerate}
  \item the first QC (\emph{lock}), guarantees that at least $2f{+}1$ replicas {\em voted} for the proposed block,
  \item the second QC (\emph{lock-commit}), guarantees that at least $2f{+}1$ replicas \textit{voted} for the block that extends the lock (consecutively succeeds the block by $P$).
\end{enumerate}

In HS-like protocols, a replica's role is to {\em guard its lock}: replicas send their current lock to an incoming leader through \texttt{NEW-VIEW} messages and require future proposals to extend their lock (or higher). 
Replicas also send their votes to the next leader in the \texttt{NEW-VIEW} messages (or an "empty" vote if they time-out). However, they {\em do not guard their vote} during this leadership change (view-change). 
Consequently, if the leader responsible for aggregating the votes for the first QC is Byzantine, 
it can prevent the lock by proposing a succeeding block that does not extend the block by $P$.
In essence, this is referred to as the tail-forking attack. \DM{The notation is confusing misleading, because above, we use $P$ to denote the proposal by $P$ which has a lock, and only the second phase is being forked. We need to phrase the paragraph above better.}

\smallskip
The \Carry mechanism protects against tail-forking by treating replica votes as first-class citizens: 
the votes not only help to reach agreement, but also transport knowledge of the last {\em safe block} (proposed by an honest leader) forward.
This strategy has a dual benefit: On the one hand,
a Byzantine leader cannot skip a safe block because the \Carry mechanism forces a leader that skips a safe block to generate a proof that it did not receive any information regarding the safe block.
On the other hand, 
an honest future leader can \textit{reinstate} a previous safe block even if it does not have $2f{+}1$ votes for it.

It is worth noting that this mechanism aligns with the BeeGees~\cite{Giridharan2022BeeGeesSA} philosophy: use the votes not merely to tally agreement, but to transport knowledge of the last safe block forward. However, \Carry embodies a distinct implementation that preserves the linear communication complexity characteristic of HS-like protocols. In essence, a \Carry leader need not collect or prove the absence of $2f{+}1$ votes to propagate the previous tail. In prior approaches, the obligation for a leader to demonstrate such a proof was the root cause of quadratic overhead. 

\mypp{The \Ctail Solution.}

We present a full solution for atomic broadcast called \Ctail that incorporates \Carry into HotStuff-2. Briefly, \Ctail operates a view-by-view regime. At the beginning of each view, replicas send the incoming leader their votes for each of the most recent $\rho$ views\footnote{One can optimize and send votes for fewer views in most cases, e.g., when a proposal extends an immediate predecessor; we omit this for brevity.}.
If it abstained, a replica issues a signed empty share. The total payload across $n$ replicas is $O(\rho) \times n$. 

The leader sends replicas a proposal extending the highest pending previous proposal. It uses the vote-information it collected to justify (only) the views it skips, thus enabling a linear solution. More specifically, there are two cases:

\begin{enumerate}
    \item \textbf{Easy case:} If the leader has collected $2f{+}1$ votes for the highest previous proposal, it bundles them into a QC and extends that QC in its proposal—no further evidence is needed.

    \item \textbf{Hard case:} If there is no fresh QC, existing HS-like protocols fall back to quadratic communication to prove the absence of higher certificates. \Carry avoids this blow-up by reinstating forward the single highest vote that extends the leader’s highest known QC: \begin{itemize}
        \item The leader reinstates the full block corresponding to that highest vote.
        \item To guarantee that there could be no higher vote, for each view between the highest vote and the current view, it also aggregates $2f{+}1$ empty vote shares into an empty certificate, proving that no quorum could have formed there.
        \item All these attachments remain bounded by $O(n \cdot \rho)$.
    \end{itemize}
\end{enumerate}

A replica accepts a leader proposal as justified if it incorporates the replica's highest vote. Accordingly, when it starts the next view, the replica sends to the next incoming leader a \texttt{NEW-VIEW} message containing its vote for this view or an empty-share.


\mypp{\Ctail Properties.} 
%

The key property \Ctail achieves is that only a consecutive succession of more than $\rho$ bad leaders could prevent a proposal amidst them from being reinstated or extended by the next successful view proposal:


\begin{flushleft}
\textbf{Definition 1} (\emph{$\rho$-tail-resilience}). \emph{We say that a proposal $T$ is \textit{$\rho$-isolated} if view of $T$ is situated among a succession of $\rho$ consecutive bad leaders.
(After GST,) each proposal $T$ by an honest leader, which is not $\rho$-isolated, is guaranteed to be included in the global sequence.}
\end{flushleft}

%

In other words, if there is no unlucky succession of more than $\rho$ views preceding a tail, the tail will not be forked. By the pigeon-hole principle, a rotation of $3f{+}1$ leaders has $2f{+}1 - (f/\rho)$ leaders that obtain $\rho$-tail-resilience against a worst-case scenario of $f$ bad leaders interspersed adversarially. For a reasonably small choice of (say) $\rho=6$, only a fraction $\frac{1}{12}$ of honest leader proposals, in theory, get forked. It is worth noting that a consecutive succession of $6$ bad leaders is highly unlikely in practice, especially if the leader rotation is randomized.

$\rho$-tail-resilience is related to a property in BeeGess~\cite{Giridharan2022BeeGeesSA} called \textit{``Any-Honest-Leader commit''} (AHL). Under AHL, (after GST) every block by an honest leader eventually gets committed. On its own, $\rho$-tail-resilience is weaker than AHL, but it suffices for censorship resistance and effective good-put while preserving linearity. Importantly, whereas BeeGees employs a complex leader hand-off to satisfy AHL, \Ctail is advantageous in practice.
Below we discuss an additional property that \Ctail maintains and BeeGees does not.

As for liveness, in \Ctail, a consecutive pair of honest leaders, a two-chain, ensures a commit decision. 
Consequently, \Ctail guarantees an unbounded number of commit blocks and progress:


\begin{flushleft}
\textbf{Definition 2} (\emph{Liveness}). After GST, there is an unbounded number of committed blocks proposed by honest leaders.\footnote{despite potential tail-forking attacks}
\end{flushleft}

\Ctail achieves liveness and tail-resilience while maintaining a key tenet of the HS-like protocol family: the communication incurred by the leader handover protocol is bounded by $O(\rho \cdot n)$. 


\begin{flushleft}
\textbf{Definition 3} (\emph{Linearity}). \emph{After GST, a commit decision incurs $O(f_a \cdot n)$ word-communication cost in face of $f_a$ actual faults, and every sequence of $O(n)$ commit decisions incurs $O(n^2)$ communication.}
\end{flushleft}

\mypp{Discussion.} 
The \Carry mechanism tackles an additional problems that revolves around leader \textit{slowness}: \textbf{stragglers}, which arises when leaders are slow to propose due to benign environmental issues. Such stragglers may slow protocol progress for everyone and, moreover, they might miss proposing their own blocks altogether. Consensus systems typically allow for a generous time slot per leader to prevent expiration on stragglers. Consequently, when a real fault occurs, they are slow to react. 

By enforcing that every new leader reinstates its highest vote and provides explicit proofs of emptiness for intervening views, \Carry eliminates leader-induced stalls and aids slow or straggling replicas. Importantly, a proposal from a slow leader can be reinstated and eventually committed even if it does not receive $2f{+}1$ votes in its original view. 

In summary, \Carry restores HotStuff’s trademark efficiency while closing its leader-performance gap.


\section{Background}
This paper focuses on protocols that are designed for solving Byzantine Fault Tolerant atomic broadcast (aka BFT consensus), in the standard partial synchrony model with $n=3f{+}1$ replicas, where at most $f$ replicas are Byzantine (e.g., see~\cite{Kang2024HotStuff1LC,Malkhi2023HotStuff2OT,Yin2019HotStuffBC}).

\subsubsection*{Streamlined, Block-based Protocols: Key concepts}

We start with a brief recollection of two key concepts in HS-like protocols: \textit{Streamlined} and \textit{Block-Based}.  

A streamlined block-based consensus protocol is a type of BFT consensus algorithm that simplifies and accelerates block production in distributed systems such as blockchains. 
Traditional BFT protocols, such as PBFT~\cite{pbftj}, use multiple distinct phases (e.g., pre-prepare, prepare, commit) to reach agreement and often require complex view changes and high communication overhead (quadratic in some cases).
In contrast, streamlined protocols:
\begin{itemize}
    \item Reduce the number of phases, often reusing or overlapping them.
    \item Embed view changes naturally into the block proposal process, avoiding separate and expensive view-change protocols.
    \item Typically rotate leaders frequently (every view), even in the absence of faults.
\end{itemize}
This streamlined design improves throughput (overlapping phases), latency (faster finality), simplicity (fewer protocol states and messages), and efficiency (linear communication complexity).
Streamlined protocols are typically coupled with a Block-Based design. In Block-Based protocols:
\begin{itemize}
    \item Each proposal is a block that contains transactions and metadata (e.g., QCs).
    \item The blockchain is a chain of quorum-certified blocks, each building on the previous one.
    \item The safety of decisions (i.e., agreement on committed blocks) is ensured by how these blocks and certificates reference one another.
\end{itemize}

HotStuff, the first streamlined and block-based protocol, incurs only linear communication cost per leader block and avoids the expensive view-change regime of PBFT.
Its streamlined variant proposes a new block every view, rotating leaders every phase and avoiding extra recovery rounds.

\mypp{Blocks and Metadata.}
A HS-like protocol works through views.
In each view $v$, one replica is designated as the leader and is responsible for proposing a block.
We denote the leader of view $v$ by $L_v$ and its proposed block by $B_v$; conversely, we denote the view $v$ of a block by $\View{B_v}$.


A key ingredient in the protocol is (a leader) forming a \textit{quorum-certified} block:
\begin{itemize}
    \item A \textit{Quorum Certificate (QC)} is a cryptographic proof that a block is approved by a quorum of replicas. It is formed by aggregating $2f{+}1$ valid votes (signature shares\footnote{To generate and aggregate signature shares, HS-like protocols make use of threshold signatures schemes.}) from replicas in a system of $n \geq 3f{+}1$.
    \item QCs ensure safety by proving that a quorum of replicas agreed on a proposal. We will interchangeably denote the QC for a proposal from view $v$ by $\QC{B_v}$ or $\QC{v}$. 
\end{itemize}

Leaders form a QC in HotStuff-2 with linear communication overhead by borrowing a technique from~\cite{Cachin2001SecureAE,Reiter1994SecureAP}
: a sender that (i) disseminates a block proposal, (ii) collects certified acknowledgments (votes) that are signed with signature-shares, and (iii) aggregates these shares via threshold cryptography. The sender can disseminate the QC to all replicas at a linear word-communication cost. 

Blocks are chained to one another through QCs and other metadata. This allows replicas to detect conflicts (e.g., forks) and enforce voting rules to maintain consistency: 

\begin{itemize}
    \item Each block $B$ extends a previously certified block $B'$ by including the certificate $\QC{B'}$ as $B.qc$.
    \item Unique to \Carry, a block $B$ may \textit{reinstate} an uncertified previous block $B'$,  without having a QC for it, by embedding it (in full) as a \textit{reinstated block} in $B.reinstate$. By slight abuse of terminology, we also say that $B$ extends $B'$.
    \item The transitive closure of the two ``extend'' relationships is denoted '$\succeq$'. 
\end{itemize}

\subsubsection*{Streamlined HotStuff-2 in a Nutshell}
\begin{figure}
\centering
\includegraphics[width=.7\linewidth]{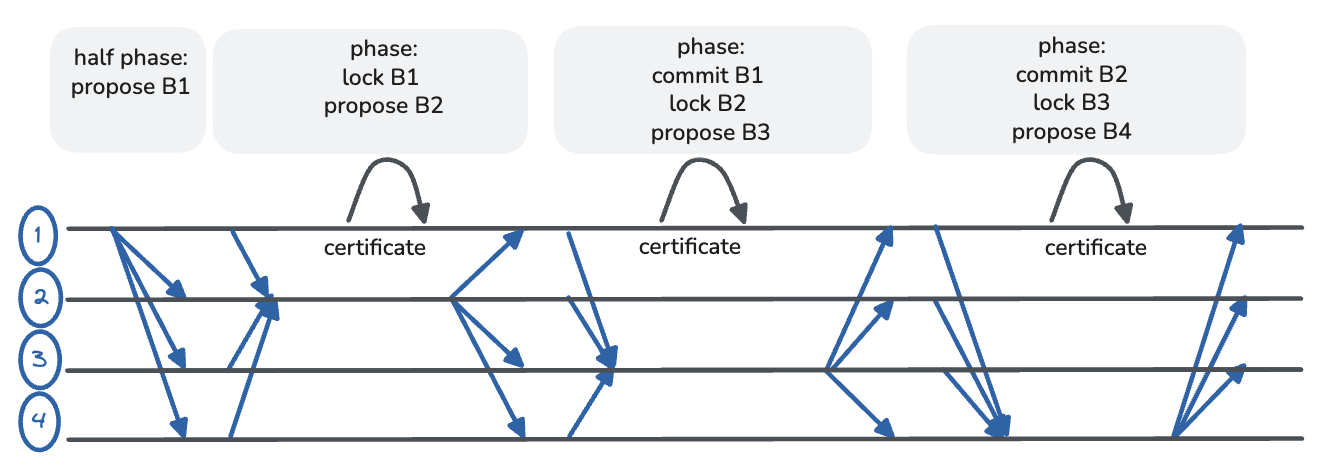}
\caption{Streamlined HotStuff-2 protocol flow.} \label{fig:shs2}
\end{figure}   

We use streamlined HotStuff-2~\cite{Malkhi2023HotStuff2OT}, a variant of HotStuff that reduces its latency by two-half phases, to exemplify the \Carry mechanism. The work flow of HotStuff-2 is depicted in Figure~\ref{fig:shs2}. 
{\em Note:} \Carry can be applied to other streamlined and block-based protocols.

The key safety rules of the protocol are defined below.

\mypp{Leader Proposal Rule.}
A leader for view $v$ may propose a block once any of the following conditions are met:
\begin{enumerate}
\item It forms a fresh quorum certificate $\QC{w}$ by aggregating $2f{+}1$ \texttt{NEW-VIEW} messages with identical votes for some prior view $w < v$.
\item It receives $3f{+}1$ \texttt{NEW-VIEW} messages or the Pacemaker indicates that sufficient time has passed to collect messages from all honest replicas.
\end{enumerate}

These conditions ensure safety by extending the highest known QC, and liveness by allowing the leader to eventually propose even under partial synchrony.

\mypp{Voting Rule.}
A replica follows a Voting Rule to decide whether to accept and vote for a leader's proposal. This decision ensures that all honest replicas remain consistent; they only vote for proposals that do not conflict with previously locked blocks.

Specifically, each replica maintains a \textit{lock}, which is the highest QC attached to a block it has previously voted for. A replica votes for a proposed block $B$ if and only if $B.qc$ has a higher or equal view than $\text{lock}$. Then it locks on $B.qc$. 

\mypp{Commit Rule.}
A Commit Rule in HotStuff-2 requires a chain of two consecutive QCs to commit a proposal. 
The replica commits a block $B_v$ to the total order if it receives $B_{v+1}$ and $B_{v+2}$, with $B_{v+2}.qc = B_{v+1}$ and $B_{v+1}.qc = B_v$. 

\toggletrue{V1}

\begin{algorithm}[t]
\DontPrintSemicolon
\SetAlgoVlined
\KwData{Each replica maintains its 
$highest{\text -}vote$ and a \textit{lock} on the QC attached to its highest{\text -}vote
}

\Begin(view $v$){
\uIf{replica is $L_v$ \inlinecomment{Leader logic}}{ 
  wait until LeaderProposalConditions($v$) is true \;
  create a block $B_v$, AttachMetadata($B_v$) and propose $B_v$ \;
}

\BlankLine
\Upon(receiving proposal block $B_v$ \inlinecomment{Replica logic (including leader)}){
    \uIf{VotingRule($B_v$)}{$lock \leftarrow view(B_v.qc)$\; \Versioned{}{$highest{\text -}vote \leftarrow B_v$}}
  Pacemaker.advanceView() \;
  CommitRule($B_v$) \;
}

\BlankLine
\Upon(Pacemaker.exitView\mbox{()} notification \inlinecomment{View-change logic}){
advance current view number to $v$\;
send \texttt{NEW-VIEW} message to next leader with: \;
\Indp 
    (i) current view number \;
    (ii) $lock$ \;
    (iii) a signature-share on $highest{\text -}vote$ \;
    \Versioned{}{(iv) up to $\rho$ signature-shares on empty($w$) for each view $w$ between $highest{\text -}vote$ and current view\;}
\Indm
}
%
} 
\BlankLine
Define \textbf{VotingRule($B_v$)} as true if: \;
\Indp 
    $view(lock) \le view(Bv.qc)$  \Versioned{}{and $CarryRule(B_v)$}
\Indm

\Versioned{}{
\BlankLine
Define \textbf{CarryRule($B_v$)} as true if: \; 
\Indp 
$v > \View{B_v.qc} + \rho$ \inlinecomment{vacuously true, not eligible for reinstating} \textbf{OR} \;
for $T$ denoting $highest{\text -}vote$: ($B_v$ extends $T$ AND $T \succeq lock$) AND \;
$B_v$ includes empty-certificates for each view between $\max\{\View{B_v.reinstated}, \View{B_v.qc} \}$ and $v$ \;
\Indm
}

\BlankLine
Define \textbf{CommitRule($B_v$)}: \; 
\Indp 
\uIf{$B_v$ has an attached QC for $B_{v-1}$, and $B_{v-1}$ has an attached QC for $B_{v-2}$}{
  commit the uncommitted prefix of the chain up to (incl.) $B_{v-2}$\; 
}
\Indm

\BlankLine
Define \textbf{LeaderProposalConditions($v$)} as true if any of the following hold: \; 
\Indp 
(i) a fresh QC is formed from $2f{+}1$ \texttt{NEW-VIEW} messages with identical highest votes for some $w < v$\; 
(ii) $3f{+}1$ \texttt{NEW-VIEW} messages are received or receive \texttt{Pacemaker.syncView}$()$ notification\;
\Indm

\BlankLine
Define \textbf{AttachMetadata($B_v$)}: \; 
\Indp 
$B_v.qc \leftarrow$ highest known QC\;
\Versioned{}{
\uIf{$\View{B_v.qc} + \rho \leq v$ \inlinecomment{eligible for reinstating}}{
  $T \gets$ highest known vote extending $B_v.qc$\;
  \uIf{$T \succ B_v.qc$}{$B_v.reinstated \leftarrow T$ \;}
  attach to $B_v$ empty-certificates for each view between $\View{T}$ and $v$ \;
  }
\Indm
}

\Versioned{
\caption{(Streamlined) HotStuff-2 Protocol} \label{alg:hs2}
}{
\caption{\Ctail Protocol (differences from HotStuff-2 in blue)} \label{alg:carry}
}
\end{algorithm}

\mypp{Protocol Flow.}
Algorithm~\ref{alg:hs2} provides a pseudo-code description of HotStuff-2. The general flow of the protocol is view by view, as depicted in Figure~\ref{fig:shs2}, is as follows: 
\\
\begin{description}

\item[\textbf{Replica $\longrightarrow$ incoming leader.}]
At the beginning of a new view, each replica sends the incoming leader a \texttt{NEW‑VIEW} message that includes:
\begin{enumerate}
    \item the next view number,
    \item the replica’s highest QC (its \emph{lock}), and
    \item its vote as per the Voting Rule. 
\end{enumerate}

\item[\textbf{Leader $\longrightarrow$ replicas.}]
The leader sends replicas a proposal according to the Leader Proposal Rule, that consists of a block $B_v$ that extends the highest QC it aggregated.

\item[\textbf{Voting.}]
Next, replicas decide whether to accept the proposal based on the Voting Rule. 
At the end of each view, the replicas proceed to perform the handover protocol with the next incoming leader.

\end{description}

The pseudo-code in Algorithm~\ref{alg:hs2} makes use of a \texttt{Pacemaker} API: 
a replica invokes \texttt{Pacemaker.advanceView}$()$ when it wishes to exit the current view; \texttt{Pacemaker.exitView}$()$ notifies a replica that the current view has expired; \texttt{Pacemaker.syncView}$()$ notifies the replica that under synchrony conditions, all other replicas have entered the next view and their \texttt{NEW-VIEW} messages have been delivered.

\subsection*{The Leader-Slowness Problem and the Tail-Forking Attack}

Two principal problems—\textit{leader-slowness} and \textit{tail-forking}—undermine the performance benefits of streamlined protocols.

\begin{itemize}
    \item {\em Leader-slowness} hurts liveness and latency under unfavorable network conditions or due to hardware capacity limitation (see Figure~\ref{fig:tailfork}). 
   This issue arises when leaders experience delays in proposing blocks due to benign environmental factors. Such slow leaders can impede the overall progress of the protocol and may even miss the opportunity to propose their own blocks entirely. To accommodate these delays, consensus protocols often allocate generous time slots to each leader, ensuring that slow leaders are not prematurely timed out. However, this leniency also results in sluggish responsiveness when actual faults occur.

    \item {\em Tail-forking} is a performance/liveness vulnerability in streamlined HS-like protocols (depicted in Figure~\ref{fig:tailfork}).
In these protocols, the commitment of a proposal depends on the actions of the next two consecutive leaders.
A malicious leader can disrupt the commit phase by refusing to acknowledge or extend the proposal of the preceding (honest) leader.
This attack can cause forking at the tail of the chain, preventing the system from committing otherwise valid proposals.
It does not necessarily cause complete censorship, but it leads to increased latency for certain clients and reduced throughput.

\end{itemize}

\begin{figure}[t]
    \centering
    \includegraphics[width=.65\linewidth]{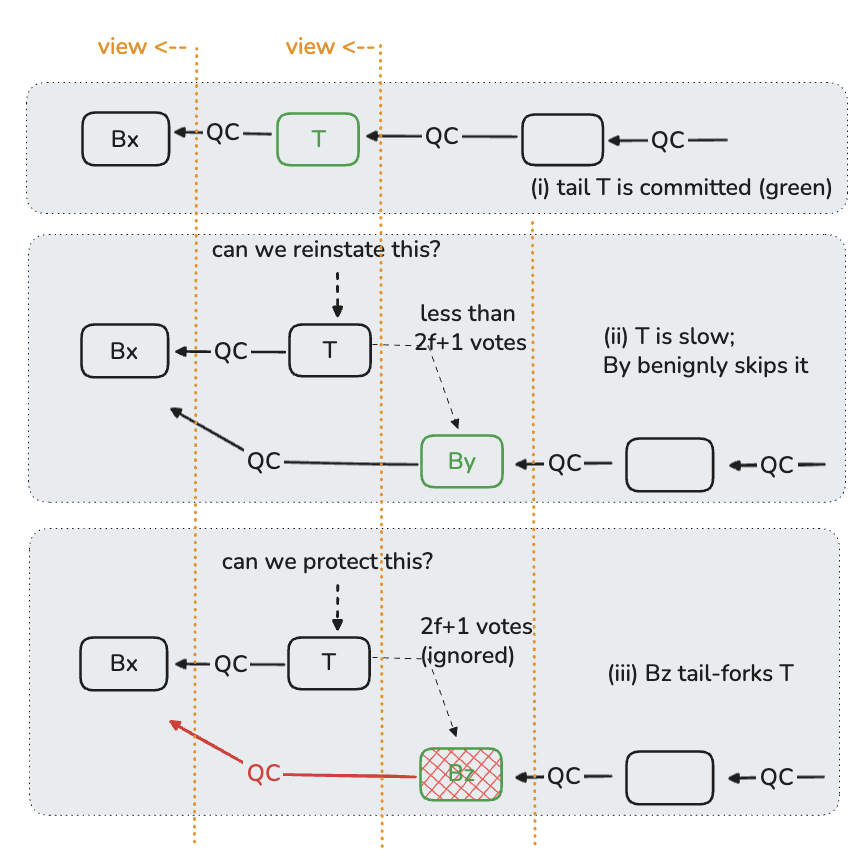}
    \caption{HotStuff-2 scenarios: normal (top), leader slowness (middle) and tail-forking (bottom).}
    \label{fig:tailfork}
\end{figure}

These issues are the core motivations for this work. The \Carry mechanism that we introduce mitigates these issues while preserving the streamlined, efficient design of HS-like protocols.



\section{Carry}
\label{sec:carry}

The \Carry mechanism improves view synchronization and eliminates most
leader‑induced stalls.  The core idea is straightforward: the next leader must
\emph{reinstate} the last uncertified block it sees, so that progress
is not impinged by the second QC's diffusion.

Recall that the core mechanism in HS-like protocols is a lock-commit regime: $2f{+}1$ replicas need to hold a lock on block $B_x$, proposed in view $x$, in order to commit it in the future. 
A replica acquires its lock on $B_x$ precisely when it votes for a tail block $T$ that extends $B_x$. For example, in HotStuff-2, which we use here to exemplify the \Carry mechanism, $T$ would directly extend $B_x$, i.e., $T.qc = \QC{B_x}$. 
Although those votes are shipped in \texttt{NEW‑VIEW} messages, nothing
in HotStuff-2 requires the next leader to honor them; a sluggish leader may fail to collect $2f{+}1$ votes, and a Byzantine leader may ignore them,
causing a \emph{tail‑forking} attack.

\mypp{Core insight.}

Votes are first-class citizens of the consensus process; replicas must guard their vote. 
Each vote implicitly endorses both the certified block and the extending block.  \Carry makes that endorsement \emph{explicit}: replicas ship their latest
vote (e.g., $T$) into the next view. The incoming leader proposal $B_v$ must \textit{reinstate} $T$, if needed, as a {\em reinstated block} $B_v.reinstated = T$.
Thus, both the certified block \emph{and} its immediate extension are
protected; $2f{+}1$ honest votes cannot be silently dropped.

\subsection*{\Carry Implementation}

The key safety rules in \Carry that are different from HotStuff-2 are defined below.

\togglefalse{V1}

\begin{algorithm}
\DontPrintSemicolon
\SetAlgoVlined
\KwData{Each replica maintains its 
$highest{\text -}vote$ and a \textit{lock} on the QC attached to its highest{\text -}vote
}

\Begin(view $v$){
\uIf{replica is $L_v$ \inlinecomment{Leader logic}}{ 
  wait until LeaderProposalConditions($v$) is true \;
  create a block $B_v$, AttachMetadata($B_v$) and propose $B_v$ \;
}

\BlankLine
\Upon(receiving proposal block $B_v$ \inlinecomment{Replica logic (including leader)}){
    \uIf{VotingRule($B_v$)}{$lock \leftarrow view(B_v.qc)$\; \Versioned{\DM{should assign highest-vote here, too}}{$highest{\text -}vote \leftarrow B_v$}}
  Pacemaker.advanceView() \;
  CommitRule($B_v$) \;
}

\BlankLine
\Upon(Pacemaker.exitView\mbox{()} notification \inlinecomment{View-change logic}){
advance current view number to $v$\;
send \texttt{NEW-VIEW} message to next leader with: \;
\Indp 
    (i) current view number \;
    (ii) $lock$ \;
    (iii) a signature-share on $highest{\text -}vote$ \;
    \Versioned{}{(iv) up to $\rho$ signature-shares on empty($w$) for each view $w$ between $highest{\text -}vote$ and current view \DM{I think we should send $\rho$ empties, regardless of highest-vote} \;}
\Indm
}
%
} 
\BlankLine
Define \textbf{VotingRule($B_v$)} as true if: \;
\Indp 
    $view(lock) \le view(Bv.qc)$  \Versioned{}{and $CarryRule(B_v)$}
\Indm

\Versioned{}{
\BlankLine
Define \textbf{CarryRule($B_v$)} as true if: \; 
\Indp 
$v > \View{B_v.qc} + \rho$ \inlinecomment{vacuously true, not eligible for reinstating} \textbf{OR} \;
for $T$ denoting $highest{\text -}vote$: ($B_v$ extends $T$ AND $T \succeq lock$) \DM{need to address the case that $B_v$ extends a QC highers than $lock$} AND \;
$B_v$ includes empty-certificates for each view between $\max\{\View{B_v.reinstated}, \View{B_v.qc} \}$ and $v$ \;
\Indm
}

\BlankLine
Define \textbf{CommitRule($B_v$)}: \; 
\Indp 
\uIf{$B_v$ has an attached QC for $B_{v-1}$, and $B_{v-1}$ has an attached QC for $B_{v-2}$}{
  commit the uncommitted prefix of the chain up to (incl.) $B_{v-2}$\; 
}
\Indm

\BlankLine
Define \textbf{LeaderProposalConditions($v$)} as true if any of the following hold: \; 
\Indp 
(i) a fresh QC is formed from $2f{+}1$ \texttt{NEW-VIEW} messages with identical highest votes for some $w < v$\; 
(ii) $3f{+}1$ \texttt{NEW-VIEW} messages are received or receive \texttt{Pacemaker.syncView}$()$ notification\;
\Indm

\BlankLine
Define \textbf{AttachMetadata($B_v$)}: \; 
\Indp 
$B_v.qc \leftarrow$ highest known QC\;
\Versioned{}{
\uIf{$\View{B_v.qc} + \rho \leq v$ \inlinecomment{eligible for reinstating}}{
  $T \gets$ highest known vote extending $B_v.qc$\;
  \uIf{$T \succ B_v.qc$}{$B_v.reinstated \leftarrow T$ \;}
  attach to $B_v$ empty-certificates for each view between $\View{T}$ and $v$ \;
  }
\Indm
}

\Versioned{
\caption{(Streamlined) HotStuff-2 Protocol} \label{alg:hs2}
}{
\caption{\Ctail Protocol (differences from HotStuff-2 in blue)} \label{alg:carry}
}
\end{algorithm}

\mypp{Carry Rule.}
A valid proposal $B_v$ from the leader $L_v$ of view $v$ has the following format:
\begin{description}
    \item[Highest QC:] the highest quorum certificate $\QC{x}$ known to the leader $L_v$ is attached as $B_v.qc = \QC{x}$; 
    \item[Reinstate:] if $L_v$ received fewer than $2f{+}1$ votes for the highest tail $T$ extending $B_x$, and $v - x \leq \rho$, then $T$ is reinstated (in full) as $B_v.reinstated$;
    \item[Justification:] If $v - x \leq \rho$, then empty-certificates are attached to $B_v$ for each view, which $B_v$ does \textbf{not} extend, between views $x$ and $v$. It is worth noting that if $B_v.reinstated$ exists, empty-certificates for views preceding the reinstated block are recursively attached within the chain of reinstated blocks.
\end{description}

\mypp{Voting-Rule}
       
A replica accepts the proposal $B_v$ of leader $L_v$ of view $v$ if:
\begin{enumerate}
    \item $B_v.qc$ has a higher or equal view than the replica's $lock$,
    \item $B_v$ adheres to the Carry Rule above.
\end{enumerate}

\Carry retains from HotStuff-2 the Commit Rule and Leader Proposal Rule.

\subsection{The \Ctail Protocol}

Akin to HotStuff-2, the \Ctail protocol flows view-by-view, as detailed in Algorithm~\ref{alg:carry}.
At the end of each view, the replicas and the incoming leader perform a handover protocol as follows:

\begin{description}

\item[\textbf{Replica $\longrightarrow$ incoming leader.}]

Each replica sends an incoming leader a \texttt{NEW‑VIEW} message that carries
\begin{enumerate}
    \item the next view number
    \item the replica’s highest QC (its \emph{lock})
    \item its vote-shares (possibly empty) in the past $\rho$ views.\footnote{It is possible to send less information if the lock precedes the current view by less than $\rho$ views; we omit this optimization for simplicity.}
\end{enumerate}
      
\item[\textbf{Leader $\longrightarrow$ replicas.}]

On satisfying the Leader Proposal Rule, the leader of view $v$ proposes a block $B_v$ that   
\begin{enumerate}
    \item extends the highest QC it has collected, potentially freshly aggregated from \texttt{NEW-VIEW} messages,  as $B_v.qc$,
    \item reinstates $T$ in full as $B_v.reinstated$, provided that the highest QC is from the past $\rho$ views and the highest voted block $T$ extends the highest QC, 
    \item attaches empty-certificates for each view between $\View{T}$ and $v$.

\end{enumerate}

\end{description}

Note that if a leader receives two conflicting highest votes, it reinstates
\emph{both}; honest replicas will recognize the view as
faulty and drop the conflicting votes. We omit the details from Algorithm~\ref{alg:carry} for brevity.

\subsection*{Correctness Proofs}

We envision \Carry as a drop-in mechanism for any HS-like protocol. 
In this paper, we use HotStuff-2 to explain the design of \Carry. 
Thus, \Ctail inherits the safety and liveness properties of HotStuff-2, since the \Carry mechanism only boosts performance but does not change the basic safety and liveness rules. We defer these proofs to Appendix~\ref{s:appendix} and concentrate here on proving the key novel property of \Ctail, $\rho$-tail-resilience.

\begin{theorem}
    After GST, a tail block $T$ proposed by an honest leader that is not $\rho$-isolated and receives $2f{+}1$ votes will be extended by the next honest view. 
\end{theorem}

\begin{proof}
    As $T$ is not $\rho$-isolated, there exists two honest views $x$ and $y$, such that $x < \View{T} < y \leq x + \rho$. We want to show that the honest leader $L_y$'s proposal $B_y$ extends $T$ in the chain, $B_y$.qc $= T$.  
    In particular, if \(B_y\) receives votes from all honest replicas before the view expires, then every honest replica locks on \(B_y.qc\), and \(T\) will commit as soon as a later block in the chain is committed.

    {\em Easy Case:} When $y = \View{T}{+}1$, $L_y$ forms a QC for $T$ as the highest QC, proposes a block $B_y$ that extends $\QC{T}$ and disseminates it. All honest replicas become locked on $B_y.qc$, and $T$ is guarded from ever being skipped. 

    {\em Other Case:} When $y > \View{T}+1$ and $L_y$ obtains a QC for a view higher than $\View{T}$. 
    We need to show that $B_y$ does not skip $T$, i.e., that $B_y.qc \succ T$. 
    We do so by showing that between views $x$ and $y$ if a proposal $P$ extends $T.qc$ and receives $f{+}1$ honest votes, then $P$ is extended by the next valid proposal $P'$. {\em Note:} since $L_x$ is honest and $T.qc$ is from a view $x$ or higher, the number of views between $T.qc$ and $P'$ do not exceed $\rho$.

    By the Carry Rule, if $P'.reinstated$ exists, $P'$ must include an empty-certificate as a justification for each view between $P'.reinstated$ and $P'$. Because $P'$ is the next valid proposal succeeding $P$, $P'.reinstated$ has to be from a view not higher than $P$. But since $P$ has $f{+}1$ honest votes, it is impossible for $P'$ to have an empty certificate for it. Hence, $P'.reinstated$ must be $P$. 
If $P'.reinstated$ does exist, then $P'$ must have a fresh QC. Again, by minimality, $P'.qc$ must not be higher than $P$, but cannot skip $P$ either. Hence, in this case, $P'.qc$ is for $P$.

From the argument above, there may be a succession of valid proposals between $x$ and $y$ extending $T.qc$ that are chained to each other via QCs or reinstated blocks. $T$ is within this chain, and $L_y$ terminates it. We obtain that $L_y \succ $T.
\end{proof}

\begin{figure}[t]
    \centering
    \includegraphics[width=0.75\linewidth]{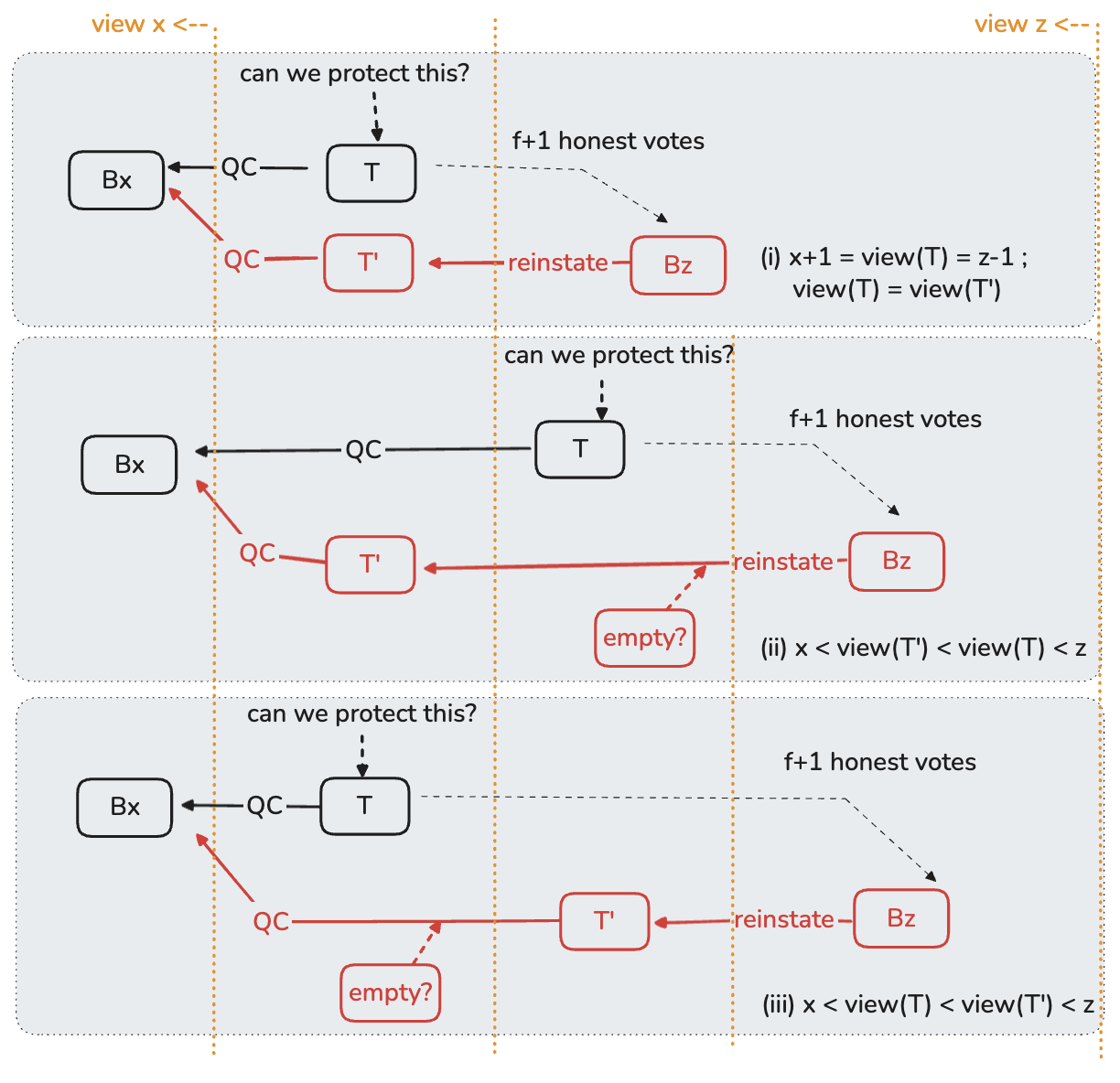}
    \caption{Attempts by $L_z$ to hide $T$ by attaching an invalid reinstated-block $T'$ which skips $T$.}
    \label{fig:carry}
\end{figure}

\mypp{Illustration.}

Consider the case of an honest tail $T$ referencing a QC from a view $x < \View{T}$. We want to prevent a bad leader in a future view $z$ from tail-forking $T$. 
We assume that views $x$ and $z$ are not more than $\rho$ views apart, i.e., $z \leq x + \rho$.
Figure~\ref{fig:carry} illustrates diagrammatically 
how the \Carry mechanism prevents tail-forking of $T$ under three possible cases:

\begin{description}
  \item[(i) Consecutive views.]
        If $z = x + 2$, then $\mathrm{view}(T') = \mathrm{view}(T)$ and forking is impossible.
  \item[(ii) Skipping forward.]
        If $\mathrm{view}(T') < \mathrm{view}(T)$, the $f{+}1$ honest replicas who voted for $T$
        will refuse to vote for $B_z$ due to the lack of empty-certificate for $\mathrm{view}(T)$.
  \item[(iii) Skipping backward.]
        If $\mathrm{view}(T') > \mathrm{view}(T)$, those same $f{+}1$ honest replicas will not sign an
        empty-certificate for view $\mathrm{view}(T)$; $L_z$ thus lacks the proof
        mandated in the Carry Rule: it will miss an empty-certificate for $\mathrm{view}(T)$ which is between the locked-view $x$ and the alleged highest reinstated $\mathrm{view}(T')$. 
\end{description}

Hence, any successful fork needs \emph{$\rho{+}1$ consecutive Byzantine leaders}.
For all but this extreme scenario, \Carry forces each proposal to acknowledge the highest vote or certificate held by honest replicas. This prevents faulty leaders from making unilateral progress while bypassing honest blocks, thereby neutralizing tail-forking attacks under synchrony.

In addition, Carry mitigates the Leader-Slowness issue where a benign straggler leader $L_v$ fails to send $B_v$ to $2f+1$ honest replicas before timeout. It is guaranteed that $B_v$ is not skipped as long as at least $f+1$ honest replicas vote for $B_v$ and $L_v$ is not \emph{$\rho$-isolated}.


\section{Related Work}

\textbf{Rotational Leader.}
The historical perspective which leads to \Ctail stems from HotStuff~\cite{Yin2019HotStuffBC}, the first atomic broadcast protocol which incurs linear word-communication in steady state, even in the case of a handover from a faulty leader; and a quadratic word communication complexity in the worst case under a cascade of faulty leaders. RareSync~\cite{Civit2023ByzantineCI} and Lewis-Pye~\cite{LewisPye2022QuadraticWM} provide an implementation of HotStuff's pacemaker module (that was left underspecified) that provides view-synchronization with worst-case quadratic communication cost and amortized linear.

The HS-like family of protocols reduces leader replacement communication costs to linear, 
enabling regular leader replacement at no additional communication cost or drop in system 
throughput. 
Additionally, these protocols streamline protocol phases to double the system throughput. 
Variants of HotStuff which maintain linearity include
HotStuff-2~\cite{Malkhi2023HotStuff2OT}, which achieves two-phase latency while maintaining linearity; and HotStuff-1~\cite{Kang2024HotStuff1LC}, which contributes a streamlined variant of HotStuff-2 and speculative fast confirmation. 
Several other protocols have aimed two-phase streamlined and linear latency.
However, Fast-HotStuff~\cite{Jalalzai2020FastHotStuffAF} and Jolteon~\cite{jolteon} have quadratic complexity in view-change; 
AAR~\cite{aar} employs expensive zero-knowledge proofs; Wendy~\cite{Giridharan2021NoCommitPD} relies 
on a new aggregate signature construction (and it is super-linear); Marlin~\cite{marlin} introduces an additional \emph{virtual block}, granting the leader an extra opportunity to propose a block supported by all honest replicas but falls back to 3-phase in the presence of failures.

\textbf{Tail-Forking.}
BeeGees ("BG")~\cite{Giridharan2022BeeGeesSA} indicated that HS-like protocols suffer significant degradation in throughput against the tail-forking problem (the original HotStuff variant may even lose liveness in an extreme scenario). Our work builds on the BG observation and borrows insights largely from it, while aiming to maintain linearity.

BG formulated a property called \textit{``Any-Honest-Leader commit''} (AHL): after GST, once an honest leader proposes in a view, that block will be committed after at most $k$ subsequent honest-leader views\footnote{$k$ is a protocol parameter indicating the number of phases to reach a commit; typically, $k=2$ or $k=3$.}. BG employs a complex leader handover in order to satisfy AHL.   
Rather than AHL, we formulate a property called \textit{$\rho$-tail-resilience}. This property guarantees, under a reasonably small choice of $\rho$ (e.g., $\rho=6$), that a large constant fraction of honest leader proposals become committed even against worst-case tail-forking attacks.
Note that this property is weaker than AHL, but suffices for the protocol to be censorship free and maintain effective good-put, while preserving linearity.
In terms of technical ingredients, 
BG falls back into the regime set forth in PBFT~\cite{Castro1999PracticalBF}. This results in a complex leader handover regime: a view-change incurs quadratic word communication complexity and requires the leader to justify its proposal. A variant of BG replaces the explicit set of 2f+1 message with a SNARK, a complicated procedure with a high computational cost. 
We are somewhat concerned about the of use SNARKs to reduce word-communication complexity for two reasons: (i) throughput is bottle-necked by SNARK generation capacity, which is orders of magnitude lower than traditional BFT consensus throughput, and (ii) the original, uncompressed information is not guaranteed to be available.

The \Carry mechanism aligns with the BG philosophy: use replicas' votes not merely to tally agreement, but to transport knowledge of the last safe block forward.
However, there are two important differences. 

\begin{enumerate}
    \item Instead of leader justification, replicas guard their votes in \Carry. They simply do not vote for a proposal by a future leader unless it extends their highest vote. This follows the same intuition as HotStuff, and thus preserves its two core tenets: linearity and simplicity.
    
    \item The \Carry mechanism allows leaders to reinstate a block even if they don’t have a QC for it. This enhances good throughput against benign slowness, and a proposal by a slow leader can be committed even if it does not receive $2f{+}1$ votes. 
    
\end{enumerate}

{\bf Leader Slowness.}
The leader-slowness attack is a well-known problem in blockchains~\cite{time-role-mev,pbs2024,Daian2019FlashB2}.
Prior work has illustrated that in Ethereum, for $59\%$ of blocks, proposers have earned higher MEV rewards than block rewards~\cite{time-role-mev}, and 
any additional delay in proposing can help maximize their MEVs~\cite{time-is-money}.
There are two popular solutions to tackle leader slowness:
(i) Exclude any block that misses a set deadline to the main blockchain. However, a clever proposer can still delay proposing until the deadline~\cite{fork-based}.
(ii) Assign block rewards proportional to the number of attestations; a delayed block will receive fewer attestations and thus reduced block rewards~\cite{reduced-block-reward}. 
However, if MEV rewards exceed total block rewards, the proposer makes a profit despite losing any block reward.

{\bf Linearity.}
Maintaining linearity of the HotStuff family of solutions is of paramount importance in this paper. Hotstuff linearity has unlocked the first tight upper bounds to the atomic broadcast problem in a variety of settings which were open for decades. 

Concretely, in pure asynchrony, VABA~\cite{Abraham2019AsymptoticallyOV} provides
the first tight upper bound for the Validated Byzantine Agreement problem.
In particular, before HotStuff, the best known communication upper bound for this problem model was due to Cachin et al.~\cite{Cachin2001SecureAE}, 2001, incurring $O(n^3)$ communication complexity in expectation. VABA harnesses HotStuff to arrive at a tight solution whose communication complexity is $O(n^2)$.   

In the Authenticated Channels model (no signatures),  
Information Theoretic HotStuff~\cite{Abraham2020InformationTH} provides the first tight upper bound BA protocol with bounded communication complexity $O((f_a+1) \cdot n^2)$ against $f_a$ actual failures.

{\bf Real-World Deployments.}
In addition to foundational contributions, the defense provided by \Carry against tail-forking may be valuable in practice for real life adoptions of HotStuff. At the time of this writing, blockchain companies which have announced employing some HotStuff variant at their core (to our knowledge) include Diem(Libra), Cypherium, Flow, Celo, Aptos, Espresso Systems, Pocket Network, SpaceComputer, and many others. 
 
\section{Conclusion}
The \Carry mechanism enhances HS-like protocols' robustness by protecting against tail‐forking attacks. It also boosts performance under straggler leaders by ensuring safe progress with aggressive responsiveness without waiting for full quorums.
\Ctail, a full solution that combines these methods with HotStuff-2, exhibits efficient linearity and high performance against both malicious leader and benign transient delays.

}

\shortversion{
\section{Introduction}

HotStuff-like (HS-like) Byzantine Fault Tolerant (BFT) consensus protocols~\cite{Yin2019HotStuffBC,hotshot,Chan2023SimplexCA,Kang2024HotStuff1LC,Malkhi2023HotStuff2OT} are widely adopted in modern blockchain and decentralized systems for their conceptual simplicity, responsive liveness, linear communication, and censorship resistance. However, frequent leader rotation makes HS-like protocols vulnerable to \emph{tail-forking} attacks~\cite{Giridharan2022BeeGeesSA}, where a malicious or sluggish leader can prevent a previously proposed block from being extended, significantly degrading throughput. Existing defenses either require expensive quadratic communication or computationally heavy SNARKs to prove the absence of quorum certificates~\cite{Jalalzai2025MonadBFTFR,Jalalzai2023VBFTVB}, undermining the hallmark efficiency of HS-like protocols.

We present \textbf{\Carry}, a lightweight, drop-in mechanism for HS-like protocols that defends against tail-forking while preserving linear communication complexity. \Carry elevates votes to first-class citizens, enabling them to transport knowledge of the voted blocks in the last $\rho$ views. This prevents a Byzantine leader from skipping a block without proof, and allows an honest leader to reinstate it without collecting a quorum of votes. We implement \Carry on top of HotStuff-2 to create \Ctail, which achieves provable $\rho$-tail-resilience (see Definition~\ref{def:rho}) and liveness under hostile or sluggish leadership, all with $O(n)$ communication.

\begin{flushleft}\label{def:rho}
\textbf{Definition 1} (\emph{$\rho$-tail-resilience}). \emph{We say that a proposal $T$ is \textit{$\rho$-isolated} if view of $T$ is situated among a succession of $\rho$ consecutive bad leaders.
(After GST,) each proposal $T$ by an honest leader, which is not $\rho$-isolated, is guaranteed to be included in the global sequence.}
\end{flushleft}
\section{Background}
\subsubsection*{Streamlined, Block-based Protocols: Key Concepts}

HS-like protocols combine a \emph{streamlined} design—reducing or overlapping phases, embedding view changes into block proposals, and rotating leaders every view. HS-like protocols adopt a \emph{block-based} structure, where each proposal is a quorum-certified block extending its predecessor. The \emph{streamlined} design improves throughput, latency, simplicity, and achieves linear communication. 

In each view $v$, leader $L_v$ proposes block $B_v$; a \emph{Quorum Certificate (QC)} is formed by aggregating $2f{+}1$ threshold signature shares to prove quorum agreement, and blocks include QCs to ensure safety. The QCs are formed with linear overhead via threshold signatures, and the blocks are chained through the QCs contained.

\subsubsection*{Streamlined HotStuff-2 in a Nutshell}

\begin{figure}
\centering
\includegraphics[width=.7\linewidth]{images/stream-hs2.png}
\caption{Streamlined HotStuff-2 protocol flow.} \label{fig:shs2}
\end{figure} 

We use streamlined HotStuff-2~\cite{Malkhi2023HotStuff2OT}, a variant of HotStuff that reduces latency by two half-phases, to exemplify the \Carry mechanism (Figure~\ref{fig:shs2}), though \Carry applies to other streamlined HS-like protocols. In each view $v$, the leader may propose a block if it either forms a fresh QC by aggregating $2f{+}1$ \texttt{NEW-VIEW} messages with identical votes for some prior view $w<v$, or receives $3f{+}1$ \texttt{NEW-VIEW} messages (or a Pacemaker signal) ensuring messages from all honest replicas. Replicas maintain a \emph{lock}—the highest QC they have voted for—and vote only for blocks containing a QC whose view is no lower than that of their lock, updating their lock on voting. A commit occurs when two consecutive QCs form a chain: $B_v$ is committed if $B_{v+1}$ extends $B_v$ and $B_{v+2}$ extends $B_{v+1}$, via their QCs. The protocol proceeds view-by-view: replicas send the next leader their lock and vote; the leader proposes a block extending the highest QC it aggregated; replicas vote if rules are met, then advance to the next view via the Pacemaker, which coordinates with QCs or timeouts.

\subsection*{The Leader-Slowness Problem and the Tail-Forking Attack}

Two key issues undermine the performance of streamlined protocols: \emph{leader-slowness} and \emph{tail-forking}. Leader-slowness occurs when a leader, due to network delays or limited hardware capacity, is slow to propose blocks, potentially missing its turn; protocols often allow generous timeouts to avoid premature replacement, but this slows responsiveness to actual faults. Tail-forking arises because commits depend on two consecutive leaders: a malicious leader can refuse to extend an honest predecessor’s block, forking the tail and delaying its commitment, which increases latency, reduces throughput, and compromises fairness. These challenges motivate the \Carry mechanism, which mitigates both problems while preserving the streamlined efficiency of HS-like protocols.

\section{Carry}
\label{sec:carry}

The \Carry mechanism improves the view-change mechanisms by requiring each new leader to \emph{reinstate} the last uncertified block it observes, ensuring progress is not blocked by the leader-induced stalls. In HS-like protocols’ lock-commit regime, a block $B_x$ commits only if $2f{+}1$ replicas lock it by voting for an extending block $B$ (e.g., $B.qc = \QC{B_x}$ in HotStuff-2). While these votes are sent in \texttt{NEW-VIEW} messages, current rules do not oblige the next leader to honor them, allowing sluggish or Byzantine leaders to ignore votes and cause tail-forking. \Carry treats votes as first-class citizens: replicas guard their highest vote and carry it into the next view, and the leader must extend it or reinstate it as $B_v.reinstated = T$, protecting both the certified block and its extension from being silently dropped.

\subsection*{\Carry Implementation}
\mypp{Carry Rule.}
A valid proposal $B_v$ from the leader $L_v$ of view $v$ has the following format:
\begin{description}
    \item[Highest QC:] the highest quorum certificate $\QC{x}$ known to the leader $L_v$ is attached as $B_v.qc = \QC{x}$; 
    \item[Reinstate:] if $L_v$ received fewer than $2f{+}1$ votes for the highest tail $T$ extending $B_x$, and $v - x \leq \rho$, then $T$ is reinstated (in full) as $B_v.reinstated$;
    \item[Justification:] If $v - x \leq \rho$, then empty-certificates are attached to $B_v$ for each view, which $B_v$ does \textbf{not} extend, between views $x$ and $v$. It is worth noting that if $B_v.reinstated$ exists, empty-certificates for views preceding the reinstated block are recursively attached within the chain of reinstated blocks.
\end{description}

\mypp{Voting-Rule}
       
A replica accepts the proposal $B_v$ of leader $L_v$ of view $v$ if:
\begin{enumerate}
    \item $B_v.qc$ has a higher or equal view than the replica's $lock$,
    \item $B_v$ adheres to the Carry Rule above.
\end{enumerate}

\Carry retains from HotStuff-2 the Commit Rule and Leader Proposal Rule.

\subsection{The \Ctail Protocol}

Akin to HotStuff-2, the \Ctail protocol flows view-by-view. 
At the end of each view, the replicas and the incoming leader perform a handover protocol as follows:

\begin{description}

\item[\textbf{Replica $\longrightarrow$ incoming leader.}]

Each replica sends an incoming leader a \texttt{NEW‑VIEW} message that carries
\begin{enumerate}
    \item the next view number
    \item the replica’s highest QC (its \emph{lock})
    \item its vote-shares (possibly empty) in the past $\rho$ views.\footnote{It is possible to send less information if the lock precedes the current view by less than $\rho$ views; we omit this optimization for simplicity.}
\end{enumerate}
      
\item[\textbf{Leader $\longrightarrow$ replicas.}]

On satisfying the Leader Proposal Rule, the leader of view $v$ proposes a block $B_v$ that   
\begin{enumerate}
    \item extends the highest QC it has collected, potentially freshly aggregated from \texttt{NEW-VIEW} messages,  as $B_v.qc$,
    \item reinstates $T$ in full as $B_v.reinstated$, provided that the highest QC is from the past $\rho$ views and the highest voted block $T$ extends the highest QC, 
    \item attaches empty-certificates for each view between $\View{T}$ and $v$.

\end{enumerate}

\end{description}

Note that if a leader receives two conflicting highest votes, it reinstates
\emph{both}; honest replicas will recognize the view as faulty and drop the conflicting votes.

\subsection*{Correctness Proofs}

We envision \Carry as a drop-in mechanism for any HS-like protocol, and in this work use HotStuff-2 to illustrate its design. Since \Carry only enhances performance without altering fundamental safety and liveness rules, \Ctail inherits these guarantees from HotStuff-2. We defer detailed arguments to Appendix~\ref{s:appendix} and focus here on proving the key novel property of \Ctail: $\rho$-tail-resilience.

\begin{theorem}
    After GST, a tail block $T$ proposed by an honest leader that is not $\rho$-isolated and receives $2f{+}1$ votes will be extended by the next honest view. 
\end{theorem}

\begin{proof}
    \Dakai{The proof of the Other Case is a bit confusing for me? I would write a new version soon.}
    As $T$ is not $\rho$-isolated, there exists two honest views $x$ and $y$, such that $x < \View{T} < y \leq x + \rho$. We want to show that the honest leader $L_y$'s proposal $B_y$ extends $T$ in the chain, $B_y$.qc $= \QC{T}$.  
    %

    {\em Easy Case:} When $y = \View{T}{+}1$, $L_y$ forms a QC for $T$ as the highest QC, proposes a block $B_y$ that extends $\QC{T}$ and disseminates it. All honest replicas become locked on $B_y.qc$, and $T$ is guarded from ever being skipped. 

    {\em Other Case:} When $y > \View{T}+1$ and $L_y$ obtains a QC for a view higher than $\View{T}$. 
    We need to show that $B_y$ does not skip $T$, i.e., that $B_y.qc \succ T$. 
    We do so by showing that between views $x$ and $y$ if a proposal $P$ extends $T.qc$ and receives $f{+}1$ honest votes, then $P$ is extended by the next valid proposal $P'$. {\em Note:} since $L_x$ is honest and $T.qc$ is from a view $x$ or higher, the number of views between $T.qc$ and $P'$ do not exceed $\rho$.

    By the Carry Rule, if $P'.reinstated$ exists, $P'$ must include an empty-certificate as a justification for each view between $P'.reinstated$ and $P'$. Because $P'$ is the next valid proposal succeeding $P$, $P'.reinstated$ has to be from a view not higher than $P$. But since $P$ has $f{+}1$ honest votes, it is impossible for $P'$ to have an empty certificate for it. Hence, $P'.reinstated$ must be $P$. 
If $P'.reinstated$ does exist, then $P'$ must have a fresh QC. Again, by minimality, $P'.qc$ must not be higher than $P$, but cannot skip $P$ either. Hence, in this case, $P'.qc$ is for $P$.

From the argument above, there may be a succession of valid proposals between $x$ and $y$ extending $T.qc$ that are chained to each other via QCs or reinstated blocks. $T$ is within this chain, and $L_y$ terminates it. We obtain that $L_y \succ $T.
\end{proof}

\mypp{Illustration.}

Consider the case of an honest tail $T$ referencing a QC from a view $x < \View{T}$. We want to prevent a bad leader in a future view $z$ from tail-forking $T$. 
We assume that views $x$ and $z$ are not more than $\rho$ views apart, i.e., $z \leq x + \rho$.
Figure~\ref{fig:carry} illustrates diagrammatically 
how the \Carry mechanism prevents tail-forking of $T$ under three possible cases:

\begin{figure}[t]
    \centering
    \includegraphics[width=0.65\linewidth]{images/newcarry.png}
    \caption{Attempts by $L_z$ to hide $T$ by attaching an invalid reinstated-block $T'$ which skips $T$.}
    \label{fig:carry}
\end{figure}

\begin{description}
  \item[(i) Consecutive views.]
        If $z = x + 2$, then $\mathrm{view}(T') = \mathrm{view}(T)$ and forking is impossible.
  \item[(ii) Skipping forward.]
        If $\mathrm{view}(T') < \mathrm{view}(T)$, the $f{+}1$ honest replicas who voted for $T$
        will refuse to vote for $B_z$ due to the lack of empty-certificate for $\mathrm{view}(T)$.
  \item[(iii) Skipping backward.]
        If $\mathrm{view}(T') > \mathrm{view}(T)$, those same $f{+}1$ honest replicas will not sign an
        empty-certificate for view $\mathrm{view}(T)$; $L_z$ thus lacks the proof
        mandated in the Carry Rule: it will miss an empty-certificate for $\mathrm{view}(T)$ which is between the locked-view $x$ and the alleged highest reinstated $\mathrm{view}(T')$. 
\end{description}

Thus, any successful tail-forking attack requires \emph{$\rho{+}1$ consecutive Byzantine leaders}. Otherwise, \Carry forces each proposal to acknowledge the highest vote or certificate, blocking faulty leaders from bypassing honest blocks. Moreover, \Carry alleviates leader-slowness: even if an honest leader $L_v$ times out before reaching $2f{+}1$ replicas, its block $B_v$ cannot be skipped as long as at least $f{+}1$ honest votes exist and $L_v$ is not \emph{$\rho$-isolated}.
\section{Related Work}

\textbf{Rotational Leader and Linearity.}
HotStuff~\cite{Yin2019HotStuffBC} pioneered linear word communication in steady state, even across faulty leader handovers. RareSync~\cite{Civit2023ByzantineCI} and Lewis-Pye~\cite{LewisPye2022QuadraticWM} later refined HotStuff’s pacemaker with quadratic worst-case but amortized linear costs. HS-like protocols improve this by reducing leader replacement to linear cost and streamlining phases for higher throughput. Variants such as HotStuff-2~\cite{Malkhi2023HotStuff2OT} and HotStuff-1~\cite{Kang2024HotStuff1LC} achieve two-phase latency or speculative fast confirmation while retaining linearity. Other approaches (e.g., Fast-HotStuff~\cite{Jalalzai2020FastHotStuffAF}, Jolteon~\cite{jolteon}, Wendy~\cite{Giridharan2021NoCommitPD}, Marlin~\cite{marlin}, AAR~\cite{aar}) trade simplicity or linearity for two-phase latency, often falling back to quadratic view-change or expensive zero-knowledge proofs. Maintaining linearity is crucial, as HotStuff has enabled the first tight communication bounds for atomic broadcast in both asynchronous and authenticated-channel models~\cite{Abraham2019AsymptoticallyOV,Abraham2020InformationTH}.

\textbf{Tail-Forking and Leader Slowness.}
\textbf{Tail-Forking.}
BeeGees ("BG")~\cite{Giridharan2022BeeGeesSA} indicated that HS-like protocols suffer significant degradation in throughput against the tail-forking problem (the original HotStuff variant may even lose liveness in an extreme scenario). Our work builds on the BG observation and borrows insights largely from it, while aiming to maintain linearity.

BG formulated a property called \textit{``Any-Honest-Leader commit''} (AHL): after GST, once an honest leader proposes in a view, that block will be committed after at most $k$ subsequent honest-leader views\footnote{$k$ is a protocol parameter indicating the number of phases to reach a commit; typically, $k=2$ or $k=3$.}. BG employs a complex leader handover in order to satisfy AHL.   
Rather than AHL, we formulate a property called \textit{$\rho$-tail-resilience}. This property guarantees, under a reasonably small choice of $\rho$ (e.g., $\rho=6$), that a large constant fraction of honest leader proposals become committed even against worst-case tail-forking attacks.
Note that this property is weaker than AHL, but suffices for the protocol to be censorship free and maintain effective good-put, while preserving linearity.

In terms of technical ingredients, 
BG falls back into a complex leader handover regime: a view-change incurs quadratic word communication complexity and requires the leader to justify its proposal. A variant of BG replaces the explicit set of $2f+1$ messages with a SNARK, a complicated procedure with high computational cost. 

The \Carry mechanism aligns with the BG philosophy: use replicas' votes not merely to tally agreement, but to transport knowledge of the last safe block forward.
However, there are two important differences. 

\begin{enumerate}
    \item Instead of leader justification, replicas guard their votes in \Carry. They simply do not vote for a proposal by a future leader unless it extends their highest vote. This follows the same intuition as HotStuff, and thus preserves its two core tenets: linearity and simplicity.
    
    \item The \Carry mechanism allows leaders to reinstate a block even if they don’t have a QC for it. This enhances good throughput against benign slowness, and a proposal by a slow leader can be committed even if it does not receive $2f{+}1$ votes. 
\end{enumerate}

{\bf Leader Slowness.}
The leader-slowness problem is well known in blockchains~\cite{time-role-mev,pbs2024,Daian2019FlashB2}. In Ethereum, proposers earned higher MEV than block rewards in 59\% of blocks~\cite{time-role-mev}, and delaying proposals can further boost MEV~\cite{time-is-money}. Two common countermeasures are: (i) discarding blocks missing a deadline~\cite{fork-based}, though proposers can still wait until the deadline; and (ii) rewarding blocks proportionally to attestations~\cite{reduced-block-reward}, which penalizes late blocks. Yet, if MEV exceeds block rewards, proposers still profit despite these penalties.
\section{Conclusion}
The \Carry mechanism enhances HS-like protocols' robustness by protecting against tail‐forking attacks. It also boosts performance under straggler leaders by ensuring safe progress with aggressive responsiveness without waiting for full quorums.
}

\clearpage
\newpage
\bibliographystyle{plainurl}
\bibliography{DMrefs,xxx}

\newpage
\fullversion{
\appendix
\section{Correctness Proofs (continued)}
\label{s:appendix}

First, we discuss the guarantees for non-equivocation and non-conflicting commitment. 

\begin{lemma}\label{lem:no-equivocation}
    Let $R_1$ and $R_2$ be two honest replicas that lock certificates $\QC{v}^1$ and $\QC{v}^2$ of view $v$, respectively. 
    \Ctail guarantees that $\QC{v}^1 = \QC{v}^2$.
\end{lemma}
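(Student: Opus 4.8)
The statement is an instance of the classical non-equivocation argument for quorum-based protocols, and the \Carry additions do not interfere with it. The plan is to reduce the claim to uniqueness of quorum certificates per view. Since $R_1$ is honest, it only votes for a block whose $qc$ field is a \emph{valid} QC; hence $\QC{v}^1$ (and likewise $\QC{v}^2$) is a valid QC of view $v$. It therefore suffices to show that any two valid QCs of the same view $v$ certify the same view-$v$ block --- which, by determinism of threshold-signature aggregation, makes them identical. (I would state this last convention up front, so that proving ``same certified block'' is literally proving ``$\QC{v}^1=\QC{v}^2$''.)

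First I would record the single-vote invariant: an honest replica emits exactly one \texttt{NEW-VIEW} message per view transition, and that message carries one signature-share on a single block, namely its current $highest{\text -}vote$. Consequently an honest replica contributes at most one vote-share towards any view-$v$ QC, and if it contributes one, it is a share on a single specific block $B_v$ proposed in view $v$. Next I would unfold the definition of a valid view-$v$ QC: it aggregates $2f{+}1$ such shares from \emph{distinct} replicas, all on a common block. Suppose, for contradiction, $\QC{v}^1 \neq \QC{v}^2$; then they certify distinct blocks $B_v^1 \neq B_v^2$, supported by replica sets $Q_1$ and $Q_2$ with $|Q_1| = |Q_2| = 2f{+}1$. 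Since $n = 3f{+}1$, we get $|Q_1 \cap Q_2| \ge 2(2f{+}1) - (3f{+}1) = f{+}1 > f$, so $Q_1 \cap Q_2$ contains some honest replica $R$. By the single-vote invariant, $R$'s contribution to $\QC{v}^1$ and its contribution to $\QC{v}^2$ are both shares on $R$'s unique $highest{\text -}vote$ at the time it left view $v$, forcing $B_v^1 = B_v^2$ --- a contradiction. Hence $\QC{v}^1 = \QC{v}^2$, and in particular $R_1$ and $R_2$ lock on the same certificate.

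The only step that genuinely needs care is the single-vote invariant in the presence of the \Carry machinery, so I would spell it out: (i) a Byzantine $L_v$ may equivocate and broadcast several view-$v$ proposals, but an honest replica sets $highest{\text -}vote$ and then calls \texttt{Pacemaker.advanceView()} at most once per view, so it still ships exactly one share; (ii) reinstated blocks are irrelevant, since a reinstated block carries no QC and hence can never be the object certified by some $\QC{v}$; and (iii) the empty-vote shares introduced by \Carry attest to emptiness of a view and form a message type disjoint from block votes, so they cannot be recombined into a QC for a block. With this invariant in hand, the remainder is routine quorum intersection; note also that honesty of $R_1$ and $R_2$ themselves is used only to ensure the two certificates they lock are well-formed.
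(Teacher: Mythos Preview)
Your argument is correct and is essentially the same quorum-intersection proof the paper gives: both derive a contradiction from two distinct view-$v$ QCs by observing that each needs $2f{+}1$ signers in a system of $3f{+}1$, so some honest replica would have voted twice. The paper phrases it as ``the honest signer sets must be disjoint, hence their union has size $\ge 2f{+}2$, exceeding the $2f{+}1$ honest replicas,'' while you phrase it dually as ``the signer sets intersect in $\ge f{+}1$ replicas, hence contain an honest one''; these are equivalent. Your write-up is in fact more careful than the paper's in two respects: you explicitly isolate and justify the single-vote invariant in the presence of the \Carry additions (reinstated blocks, empty-vote shares), and you note the convention that equality of certified block implies equality of QC via deterministic aggregation. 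Neither of these is a different \emph{route}, just a more explicit version of the same one.
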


\begin{proof}
    An honest replica $R_i$ locks a certificate $\QC{v}^i$ if the Voting Rule is true, which can only happen 
    if $\QC{v}^i$ is attached to the proposal sent by the leader for view $v{+}1$ (or higher). 
    Each of these QC's are composed of threshold signature-shares of $2f{+}1$ replicas. 
    
    Let $S_i$ be the replicas that voted for certificate $\QC{v}^i$.
    Let $X_i = S_i - f$ be the honest replicas in $S_i$. 
    As $|S_i| = 2f{+}1$, we have $|X_i| = 2f{+}1 - f$.
    If $\QC{v}^1 \neq \QC{v}^2$, then $X_1$ and $X_2$ must not overlap. 
    Hence, $|X_1 \cup X_2| \geq 2(2f{+}1 - f)$. 
    This simplifies to $|X_1 \cup X_2| \geq 2f{+}2$, which contradicts $n = 3f{+}1$.
    Hence, we conclude $\QC{v}^1 = \QC{v}^2$.
\end{proof}

\begin{lemma}\label{lem:no-conflict-cert}
    If a replica $R$ receives a certificate $\QC{v+1}$ that extends certificate $\QC{v}$,
    then no certificate $\QC{w}$ conflicts with $\QC{v}$, where view $w > v$, can exist.
\end{lemma}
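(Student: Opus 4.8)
\textbf{Proof plan for Lemma~\ref{lem:no-conflict-cert}.}
The plan is to argue by contradiction, leveraging Lemma~\ref{lem:no-equivocation} together with the standard HotStuff ``locking'' invariant carried by the Voting Rule. Suppose, toward a contradiction, that a conflicting certificate $\QC{w}$ with $w > v$ exists. Consider the smallest such view $w$ among all certificates that conflict with $\QC{v}$; call its certified block $B_w$. Since $B_w$ is certified, at least $2f{+}1$ replicas voted for it, hence at least $f{+}1$ honest replicas did. The key is to show that each such honest voter must, at the time it voted for $B_w$, have been locked on a certificate of view $\ge v$ that is compatible with $\QC{v}$; this will force $B_w.qc$ to have view $\ge v$ and to be non-conflicting with $\QC{v}$, and then a short induction down the chain from $B_w$ to $\QC{v}$ yields that $B_w$ actually extends $\QC{v}$, contradicting that it conflicts.

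The key steps, in order, are: (1) observe that because some replica $R$ holds $\QC{v+1}$ extending $\QC{v}$, the block certified by $\QC{v+1}$ directly extends the block certified by $\QC{v}$, so at least $2f{+}1$ replicas — hence $f{+}1$ honest ones — voted for the block $\QC{v+1}$ certifies, and upon doing so set their lock to (a certificate of) view $v$ by the Voting Rule's ``$lock \leftarrow view(B_v.qc)$'' clause; call this honest set $H$. (2) By Lemma~\ref{lem:no-equivocation}, any certificate of view $v$ held by an honest replica equals $\QC{v}$, so every replica in $H$ is locked on exactly $\QC{v}$. (3) Take the minimal conflicting $\QC{w}$, $w > v$; its certified block $B_w$ was voted for by $2f{+}1$ replicas, so its honest voters intersect $H$ in at least one replica $R^\star$. (4) When $R^\star$ voted for $B_w$, the Voting Rule required $view(B_w.qc) \ge view(lock) \ge v$; by minimality of $w$, $B_w.qc$ cannot be a conflicting certificate of a view in $(v,w)$, and it cannot be a certificate of view $v$ other than $\QC{v}$ (Lemma~\ref{lem:no-equivocation}), so either $B_w.qc = \QC{v}$ — making $B_w$ extend $\QC{v}$, a contradiction — or $B_w.qc$ is a non-conflicting certificate of some view in $[v,w)$, which we then feed back into the same argument, strictly decreasing the ``gap'' $w - v$ each time. (5) This descent terminates and in every terminal case produces a certified block that extends $\QC{v}$, contradicting conflict.

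One subtlety to handle carefully is the meaning of ``conflict'' in the presence of \Carry's reinstated blocks: a certificate $\QC{w}$ conflicts with $\QC{v}$ iff the block it certifies is not related to the block certified by $\QC{v}$ under the combined ``$\succeq$'' relation (QC-extension or reinstatement). Because reinstated blocks are embedded in full and carry their own recursive empty-certificates, reinstatement does not create a new way to certify a block — a QC still requires $2f{+}1$ genuine votes — so the locking argument above is unaffected: a vote for $B_w$ still forces $view(B_w.qc) \ge view(lock)$, and the chain from $B_w$ back toward view $v$ is still well-founded. The main obstacle I expect is step (4): pinning down precisely why $B_w.qc$, while possibly of a view strictly between $v$ and $w$, must be \emph{non-conflicting} with $\QC{v}$ — this needs the minimality of $w$ to be stated over the right set (all certificates conflicting with $\QC{v}$, not merely those in a single replica's view), and care that the inductive descent on $w - v$ does not silently assume what it is proving. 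A clean way to package this is as an auxiliary induction statement: ``for every $k \ge 1$, every certificate of view $v{+}k$ that is held by an honest replica extends $\QC{v}$,'' proved by strong induction on $k$ using steps (1)–(4), with Lemma~\ref{lem:no-equivocation} as the base-case glue.
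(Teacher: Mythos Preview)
Your plan is correct and follows essentially the same approach as the paper's: take the minimal conflicting $\QC{w}$, intersect its $\geq f{+}1$ honest voters with the $\geq f{+}1$ honest replicas who locked on $\QC{v}$ when voting for $\QC{v+1}$, and obtain a contradiction via the Voting Rule's lock check (the paper frames this dually, asserting that the locked honest set ``will not vote for a block conflicting with $B_v$'' and deriving a cardinality contradiction $|A\cup A'|\geq 2f{+}2$). Your descent in steps~(4)--(5) can be collapsed: once $view(B_w.qc)\in[v,w)$, minimality of $w$ together with Lemma~\ref{lem:no-equivocation} already force $B_w.qc$ to extend $\QC{v}$ in one step, so $B_w \succeq \QC{v}$ and you are done without any induction on the gap.
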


\begin{proof}
    If a replica $R$ received $\QC{v+1}$ that extends $\QC{v}$, 
    it implies that $2f{+}1$ replicas that set $\QC{v}$ as their highest lock agreed to vote for $\QC{v+1}$.
    Let's denote the honest replicas in these $2f{+}1$ replicas as $A$, who will not vote for a block conflicting with $B_v$.
    
    We assume that $\QC{w}$ is the lowest QC that conflicts with $\QC{v}$ such that $w > v+1$. For $\QC{w}$ to exist, there must be $2f{+}1$ replicas who voted for it.
    Let's denote the honest replicas from these $2f{+}1$ replicas as $A'$.

    As $A$ will not vote for $B_w$, thereby $A$ and $A'$ do not overlap. Then,
    $|A \cup A'| \geq 2(2f{+}1 - f)$. This simplifies to $|A \cup A'| \geq 2f{+}2$, which contradicts $n = 3f{+}1$.
    Hence, we conclude that $\QC{w}$ could not exist.
\end{proof}

\begin{corollary}\label{cor:unique-commit}
    If an honest replica $R$ commits a block $B_v$, then no other conflicting block can commit.
\end{corollary}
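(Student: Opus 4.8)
The plan is to reduce the corollary to the two preceding lemmas by extracting, from each commit, a two-link chain of certificates and then confronting the chains rooted at the two supposedly-conflicting committed blocks.

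First I would unwind the Commit Rule. If an honest replica $R$ commits $B_v$, then $R$ has received a two-chain $B_t, B_{t+1}, B_{t+2}$ with $B_t \succeq B_v$, $B_{t+1}.qc = \QC{t}$ and $B_{t+2}.qc = \QC{t+1}$, where $\QC{t+1}$ extends $\QC{t}$. Two consequences I would record: (a) $R$ has received a certificate $\QC{t+1}$ that extends $\QC{t}$, which is exactly the hypothesis of Lemma~\ref{lem:no-conflict-cert}; and (b) since $\QC{t}$ is aggregated from $2f{+}1$ votes for $B_{t+1}$, at least $f{+}1$ honest replicas lock $\QC{t}$. I would also note that because the ancestors of $B_t$ form a chain (totally ordered by $\succeq$), any block that conflicts with $B_v$ in fact conflicts with $B_t$ as well — so it suffices to argue about the \emph{tails} $B_t$ of the committing two-chains, which always carry a genuine QC (even when $B_v$ itself is a reinstated block with no QC).

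Now suppose, for contradiction, that some honest replica $R'$ commits a block conflicting with $B_v$; let $B'$ be the tail of its committing two-chain, so the symmetric unwinding gives a certificate $\QC{v'}$ for $B'$ with $\QC{v'+1}$ extending it and $f{+}1$ honest replicas locking $\QC{v'}$, and $B'$ conflicts with $B_t$; put $v' \ge t$ without loss of generality. If $v' = t$, then $\QC{t}$ and $\QC{v'}$ are both view-$t$ certificates, each locked by an honest replica, so Lemma~\ref{lem:no-equivocation} gives $\QC{t} = \QC{v'}$ and hence $B_t = B'$, contradicting that they conflict. If $v' > t$, then $\QC{v'}$ certifies $B'$, which conflicts with $B_t$, so $\QC{v'}$ is a certificate of a view strictly above $t$ conflicting with $\QC{t}$; since $R$ received $\QC{t+1}$ extending $\QC{t}$, Lemma~\ref{lem:no-conflict-cert} forbids such a $\QC{v'}$, a contradiction. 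Hence no conflicting block commits.

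The only real care needed is bookkeeping rather than new ideas: pinning down that the conflict between $B'$ and $B_v$ propagates to a conflict between $\QC{v'}$ and $\QC{t}$ in the sense used by Lemma~\ref{lem:no-conflict-cert} (a certificate inherits the conflict relation of the block it certifies), and justifying the reduction from an arbitrary committed block to the QC-bearing tail of its two-chain. Once those are fixed, the statement follows immediately by the case analysis on whether the two tails lie in the same view.
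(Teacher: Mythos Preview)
Your argument is correct and follows the same route as the paper's: extract a pair of consecutive certificates from each commit via the Commit Rule and then invoke Lemma~\ref{lem:no-conflict-cert} on the lower pair to exclude the higher conflicting certificate, with your explicit equal-view case via Lemma~\ref{lem:no-equivocation} and your reduction to the QC-bearing tail $B_t$ actually being a bit more careful than the paper itself. One small wording slip in (b): it is $\QC{t+1}$ (not $\QC{t}$) that is aggregated from the $2f{+}1$ votes on $B_{t+1}$, and those voters then lock $B_{t+1}.qc = \QC{t}$ --- the conclusion you draw is unaffected.
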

\begin{proof}
    Assume a block $B_{w}$, proposed in view $w$, conflicts with block $B_v$ and another honest replica $R'$ commits $B_w$.
    This implies that replicas $R$ and $R'$ have conflicting states.
    To commit $B_v$ and $B_w$, $R$ and $R'$ must follow the Commit Rule, respectively: $R$ must receive two consecutive QCs $\QC{y+1}$ extending $\QC{y}$ such that $y\ge x$, and $R'$ must receive two consecutive QCs $\QC{z+1}$ extending $\QC{z}$ such that $z\ge w$. As $B_v$ conflicts with $B_w$, then obviously $\QC{y}$ conflicts with $\QC{z}$. If we assume that $y > z+1$, 
    from Lemma~\ref{lem:no-conflict-cert}, we know that if $\QC{z}$ and $\QC{z+1}$ exist, then $\QC{y}$ cannot exist, which contradicts the fact that $R$ commits $B_v$; If we assume that $z > y+1$, similarly, it contradicts the assumption that $R'$ commits $B_w$.
\end{proof}

Using these lemmas, next, we argue \Ctail's safety guarantees, which in blockchain systems ensure that at each entry in the global ledger (or log) there is a unique block.

\begin{theorem}
    \Ctail guarantees consensus safety in a system with $n \ge 3f{+}1$ replicas:
     if two honest replicas $R_1$ and $R_2$ commit blocks $A$ and $B$, respectively, at the same position $k$ in the ledger, then $A$ = $B$.
\end{theorem}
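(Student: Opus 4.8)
The plan is to obtain this theorem as a short consequence of Corollary~\ref{cor:unique-commit}; the only substantive step is to show that two \emph{distinct} blocks occupying the same ledger position must \emph{conflict} (i.e.\ neither extends the other). I would argue by contradiction, assuming $A \neq B$.

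First I would recall that, by the Commit Rule, an honest replica always commits an entire prefix of a single chain (``the uncommitted prefix of the chain up to (incl.) $B_{v-2}$''). Hence the committed portion of $R_1$'s ledger and of $R_2$'s ledger are each a genesis-rooted chain of blocks linearly ordered by the extend relation $\succeq$, and the position $k$ of a committed block is simply its index along that chain. I would then observe that every step of the extend relation moves strictly ``down'' the chain: when $B$ points to its $B.qc$-predecessor the view strictly decreases, and when $B$ points to $B.reinstated$ the Carry Rule's empty-certificate justifications force the reinstated block's view to lie strictly between $\View{B.qc}$ and $\View{B}$ while still placing it one level below $B$ in the chain. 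Consequently any ancestor/descendant pair of blocks occupies distinct ledger positions, so two blocks at position $k$ that are unequal cannot stand in the ancestor/descendant relation: $A$ and $B$ conflict.

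With that in hand the conclusion is immediate. Apply Corollary~\ref{cor:unique-commit} to the honest replica $R_1$ and the block $A$: no block conflicting with $A$ is ever committed by an honest replica. But $R_2$ is honest and commits $B$, and we have just shown $B$ conflicts with $A$ — a contradiction. Therefore $A = B$, as claimed.

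I expect the main obstacle to be the reinstatement bookkeeping in the middle step: making ``ledger position'' a well-defined, strictly monotone function of a block along the canonical chain, even though a reinstated block carries a view strictly below the reinstating view and a block may be reached both through its $qc$-pointer and through (the chain hanging off) its $reinstated$-pointer; and checking that the recursively attached empty-certificates leave no ``uncovered'' view that could let two honest chains disagree on which block is the $k$-th one without those blocks actually conflicting. The companion facts — that an honest replica never commits two different blocks at the same position of its own ledger, and that the notion of ``conflict'' used here coincides with the one in Lemma~\ref{lem:no-conflict-cert} — are routine given Lemma~\ref{lem:no-equivocation} and the $lock$-based Voting Rule inherited from HotStuff-2.
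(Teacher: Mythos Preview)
Your proposal is correct and follows essentially the same route as the paper: both arguments reduce safety to Corollary~\ref{cor:unique-commit} and then derive a contradiction from the observation that two \emph{non}-conflicting distinct blocks must occupy different ledger positions (the paper phrases this as the contrapositive, assuming $A \neq B$ with $B$ extending $A$ and deriving a position mismatch). Your treatment is in fact more careful than the paper's on the one point that matters here---justifying that ledger position is strictly monotone along the extend relation even in the presence of $B.reinstated$ links---whereas the paper's proof simply asserts ``$B$ should succeed $A$'' without unpacking the reinstatement bookkeeping; but this is added rigor, not a different approach.
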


\begin{proof}
    Lemma~\ref{lem:no-equivocation} helps to illustrate that within a view a leader cannot equivocate, that is, 
    only one block can commit in a view. 
    Furthermore, from Corollary~\ref{cor:unique-commit}, we know that two conflicting committed blocks cannot exist.
    This implies that both $A$ and $B$ are permanently part of the respective global ledgers of $R_1$ and $R_2$ and must not conflict.

    Now, assume that $A \neq B$ and $B$ extends $A$. 
    Moreover, we know that $A$ is at the position $k$ in the ledger.
    Therefore, in the ledger, $B$ should succeed $A$. 
    As $B$ is also at position $k$ in the ledger, it implies that at least $k{-}1$ blocks precede $B$ (including $A$), 
    while at most $k{-}2$ blocks precede $A$. 
    However, this contradicts the assumption that $A$ is committed at the position $k$ in the ledger with $k-1$ blocks preceding it.
\end{proof}

Next, we discuss the liveness guarantee of the \Ctail protocol.

\begin{lemma}~\label{lemma:learnall}
    After GST, an honest leader of view $v$ can learn the highest QC across all honest replicas, all votes, and empty certificates for the past $\rho$ views before $v$.
\end{lemma}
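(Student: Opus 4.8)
The plan is to reduce the statement to two ingredients and then read the three conclusions off the \texttt{NEW-VIEW} format. The first ingredient is the Pacemaker's behaviour after GST: all honest replicas converge to a common view and thereafter enter each view within a single message delay of one another, so that when the honest $L_v$ enters view $v$ and blocks on \texttt{LeaderProposalConditions}$(v)$, either the $3f{+}1$-message threshold is reached or the \texttt{Pacemaker.syncView}$()$ notification fires; in either case $L_v$ is holding the \texttt{NEW-VIEW} message of \emph{every} honest replica --- this is the only place GST and the generous per-view timeout are used, and it is what lets us speak about \emph{all} honest replicas rather than an arbitrary quorum of $2f{+}1$ (some of which could be Byzantine). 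The second ingredient is syntactic: by Algorithm~\ref{alg:carry}, each such message carries the sender's $lock$ together with a (possibly empty) vote-share for every one of the last $\rho$ views.

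Granting these, the conclusions follow by inspection. Since $L_v$ sees the $lock$ of every honest replica, it knows the one of maximal view, i.e.\ the highest QC held across honest replicas; call it $\QC{x}$. Since $L_v$ sees, for every honest replica and every view $w \in \{v-\rho,\dots,v-1\}$, whether that replica voted in $w$ and for which block, it learns all honest votes in the window, and in particular the highest vote $T$ extending $\QC{x}$ (breaking ties toward the larger view and keeping both candidates should two conflicting ones appear). Finally, for each window view $w$ that $B_v$ must justify --- one that $B_v$ neither extends through $\QC{x}$ nor through the reinstated chain rooted at $T$ --- $L_v$ assembles an empty-certificate for $w$ by aggregating $2f{+}1$ shares of empty$(w)$ drawn from the honest \texttt{NEW-VIEW} messages.

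The routine part is the Pacemaker step; the technical heart is this last read-off, namely that $2f{+}1$ honest replicas really do supply an empty$(w)$-share for every window view $w$ that $B_v$ skips. An honest replica emits empty$(w)$ exactly when its highest vote lies in a view $< w$ (and, by post-GST synchrony, it is already in view $v > w$), so the obligation reduces to showing that every honest replica's highest vote lies below the lowest view that $B_v$ genuinely skips. I would argue this from the invariant that once $f{+}1$ honest replicas have voted for a block in some view $w'$, no valid proposal may ever carry an empty-certificate for $w'$: by the Carry Rule, skipping $w'$ demands exactly such a certificate, yet none of those $f{+}1$ honest voters will ever sign empty$(w')$; hence a valid leader that reaches view $v$ with $v - x \le \rho$ is forced to reinstate or extend that block rather than skip it. Combined with the fact that honest replicas vote only along a single chain (the lock rule together with the Carry Rule), this pins every honest highest vote into the chain $B_v$ reinstates, so the only views $B_v$ actually skips carry empty-shares from all $2f{+}1$ honest replicas. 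Making this precise --- and in particular handling the boundary case where the highest honest QC is more than $\rho$ views old, so that the Carry Rule is vacuous and no justification is owed --- is the one spot that needs genuine care; the rest is bookkeeping over the \texttt{NEW-VIEW} format.
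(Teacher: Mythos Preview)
Your first two ingredients and the immediate read-off are correct and are exactly the paper's proof: after GST the view-synchronization mechanism guarantees that $L_v$ receives \texttt{NEW-VIEW} messages from all $2f{+}1$ honest replicas, and each such message carries the sender's lock plus a vote-share or empty-share for each of the last $\rho$ views. From this the paper concludes directly that the leader learns the highest honest QC and, for every view in the window, either learns a vote or can aggregate an empty certificate. That is the entire argument.

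Your final paragraph is misdirected. You treat as ``the technical heart'' the question of whether $2f{+}1$ honest replicas really supply an empty$(w)$-share for every view $w$ that $B_v$ ends up skipping, and you reach for the Carry Rule invariant and a claim that honest replicas vote along a single chain to resolve it. But this lemma is purely informational --- it is about what the leader \emph{can learn} from the messages it receives, not about the shape of the proposal $B_v$ it will eventually build or which views that proposal skips. The paper takes the per-view disjunction (vote or empty-certificate) straight from the message contents and stops; the question of whether the resulting $B_v$ satisfies the Carry Rule and is accepted by honest replicas is deferred to the next lemma. By pulling in the Carry Rule here you are both proving more than the lemma asks and risking circularity, since the invariant you invoke (no valid proposal carries an empty-certificate for a view with $f{+}1$ honest votes) is precisely the content of the main $\rho$-tail-resilience theorem that this lemma is meant to feed. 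The claim that ``honest replicas vote only along a single chain'' is also not something you have established at this point and is not needed.
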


\begin{proof}
    The View Synchronization mechanisms~\cite{raresync,LewisPye2022QuadraticWM} ensure that after GST the leader can receive \texttt{NEW-VIEW} messages from at least $2f{+}1$ honest replicas. 
    
    As each \texttt{NEW-VIEW} message from a replica $R$ contains the \emph{highest-QC} and \emph{highest-vote} known to $R$ and vote-shares of past $\rho$ views before $v$. Thus, for past $\rho$ views between the \emph{highest-QC} and the current view $v$, the leader can either form empty certificates or learn votes. Also, the leader can learn the highest QC across all honest replicas or aggregate a higher one using collcted votes.
\end{proof}

\begin{lemma}~\label{lemma:voteall}
    After GST, the $B_v$ from an honest leader of view $v$ will receive votes from all honest replicas.
\end{lemma}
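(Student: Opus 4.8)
\textbf{Proof plan for Lemma~\ref{lemma:voteall}.}
The goal is to show that after GST an honest leader $L_v$'s proposal $B_v$ is accepted (and hence voted for) by every honest replica. Since a replica votes for $B_v$ exactly when the Voting Rule holds --- namely $\View{lock} \le \View{B_v.qc}$ together with the Carry Rule --- the plan is to verify both conjuncts hold at every honest replica when $L_v$ follows the protocol. I would start by invoking Lemma~\ref{lemma:learnall}: after GST, $L_v$ collects \texttt{NEW-VIEW} messages from at least $2f{+}1$ honest replicas, so it learns the highest QC held by any honest replica, all honest votes, and enough empty-shares to build empty-certificates for every view in the window of size $\rho$ preceding $v$. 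This is the engine that drives the rest of the argument.

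First I would dispatch the lock condition. By \texttt{AttachMetadata}, $L_v$ sets $B_v.qc$ to the highest QC it knows, which by Lemma~\ref{lemma:learnall} is at least as high as the highest QC locked by any honest replica; hence $\View{lock_R} \le \View{B_v.qc}$ for every honest $R$, so the first conjunct of the Voting Rule is satisfied. Next I would handle the Carry Rule, splitting on whether $\View{B_v.qc} + \rho < v$. In that case the Carry Rule is vacuously true and we are done. Otherwise $L_v$ is ``eligible for reinstating'': it picks $T$, the highest known vote extending $B_v.qc$, sets $B_v.reinstated = T$ if $T \succ B_v.qc$, and attaches empty-certificates for every view strictly between $\View{T}$ (or $\View{B_v.qc}$ if no reinstatement) and $v$. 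I then need to check, for an arbitrary honest replica $R$ with $highest{\text -}vote = T_R$: (a) $B_v$ extends $T_R$, and (b) $T_R \succeq lock_R$, and (c) the required empty-certificates are present. Point (b) is immediate since a replica's lock is the QC attached to its own highest vote, so $T_R$ extends $lock_R$ by construction. For (a) and (c): because $L_v$ heard $R$'s vote (Lemma~\ref{lemma:learnall}), $\View{T_R} \le \View{T}$; if $\View{T_R} = \View{T}$ then $T_R = T$ (non-equivocation within a view, via Lemma~\ref{lem:no-equivocation} applied to the votes, or simply because an honest $R$ has the unique honest-majority vote of that view) and $B_v$ extends it; if $\View{T_R} < \View{T}$, then $R$ must see that $B_v$ either extends $\View{T_R}$ further down the reinstated chain or carries an empty-certificate for $\View{T_R}$ --- but $R$ itself voted (non-empty) in $\View{T_R}$, so an honest $L_v$ that heard this vote will not, and need not, forge an empty-certificate there; instead $T_R$ lies on the chain $B_v \succeq T \succeq \cdots \succeq B_v.qc$, because $L_v$ chose $T$ as the \emph{highest} vote extending its QC and every honest vote extends the honest replica's own lock. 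I would make this precise by an induction along the reinstated/QC chain rooted at $B_v.qc$, using the fact that between $\View{B_v.qc}$ and $v$ there are at most $\rho$ views so all the needed shares were in the \texttt{NEW-VIEW} window.

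The main obstacle I anticipate is the last point: arguing that every honest replica's highest vote $T_R$ actually sits on the reinstated/QC chain that $B_v$ terminates, rather than being skipped by an empty-certificate. This requires knowing that all honest highest-votes in the relevant window form a chain (each extends a common ancestor $B_v.qc$ or the previous one), which in turn rests on the HotStuff-2 locking discipline --- a replica only votes for a block whose QC is at least as high as its lock, and updates its lock and highest-vote together. I would isolate this as a sub-claim (``all honest highest-votes in views $[\View{B_v.qc}, v)$ lie on a single chain above $B_v.qc$'') and prove it by the same non-overlapping-quorum counting used in Lemma~\ref{lem:no-conflict-cert}; once that sub-claim is in hand, $L_v$ picks the tip $T$ of that chain, the empty-certificates cover exactly the genuinely empty views, and the Carry Rule check goes through at every honest $R$. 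The remaining steps --- view synchronization, message counts, and the $O(\rho \cdot n)$ payload bound --- are routine given Lemma~\ref{lemma:learnall}.
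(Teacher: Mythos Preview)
Your decomposition mirrors the paper's: invoke Lemma~\ref{lemma:learnall}, dispatch the lock condition via the highest-QC argument, then split the Carry Rule on whether the $\rho$-window applies. The paper's own proof is considerably terser --- it simply asserts that the honest leader's attachment of a reinstated block plus empty-certificates ``meets the Carry Rule'' without examining the per-replica check that $B_v$ extends each honest replica's own $highest{\text -}vote$ --- so in structure you are following the paper but being more careful, not less.

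The obstacle you isolate (whether every honest replica's $T_R$ actually lies on the reinstated chain terminating in $B_v$) is real, and the paper glosses over it. However, your proposed resolution has a flaw. The quorum-intersection counting of Lemma~\ref{lem:no-conflict-cert} works because certificates are backed by $2f{+}1$ signatures; individual votes carry no such weight, so the argument does not transfer. Concretely, a Byzantine leader in some view $w \in (\View{B_v.qc}, v)$ can equivocate, causing two honest replicas to hold conflicting highest-votes at view $w$, and no overlap argument prevents this. Your sub-claim ``all honest highest-votes in $[\View{B_v.qc}, v)$ lie on a single chain'' is therefore false as stated. The protocol's actual fix is the ``reinstate both'' rule mentioned in the paper's prose (and explicitly omitted from Algorithm~\ref{alg:carry} for brevity): on seeing conflicting highest votes, the honest leader reinstates both, and honest replicas recognize that view as faulty. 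Your induction along the reinstated/QC chain would need to incorporate this mechanism rather than rely on quorum counting.
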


\begin{proof}
    The View Synchronization mechanisms ensure that after GST, each proposal arrives at all honest replicas before timeout. 
   From Lemma~\ref{lemma:learnall}, we know that each proposal contains the highest QC across all honest replicas, meeting the first part of the Voting Rule. 

   The second part checks the Carry Rule: if the QC is from a view that is more than $\rho$ views earlier than $v$, then the Carry Rule is met. 
   Otherwise, according to the protocol and Lemma~\ref{lemma:learnall}, the proposal from an honest leader would contain (i) a reinstated block (highest-vote) extending the highest-QC, and (ii) all empty certificates between the reinstated block and view $v$, meeting the Carry Rule.

   Thus, all honest replicas will satisfy the Voting Rule and will vote for the proposal.
\end{proof}

\begin{theorem}
    After GST, there is an unbounded number of committed blocks proposed by honest leaders.
\end{theorem}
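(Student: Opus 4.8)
The plan is to establish the theorem by isolating the simplest guaranteed commit scenario — three consecutive honest leaders after GST — and then showing this scenario recurs infinitely often. First I would argue the combinatorial ingredient: under the usual round-robin rotation over the $3f{+}1$ replicas, with at most $f$ of them faulty, every window of $3f{+}1$ consecutive views contains a run of at least three consecutive honest leaders. This is pigeonhole: in one full rotation the $2f{+}1$ honest leaders are split by the $f$ faulty leaders into at most $f$ cyclic arcs whose lengths sum to $2f{+}1$, so some arc has length at least $\lceil (2f{+}1)/f\rceil = 3$. Since GST is finite and the protocol proceeds through infinitely many rotations afterwards, there are infinitely many views $v$, all after GST, for which $L_v$, $L_{v+1}$, and $L_{v+2}$ are all honest.

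Next I would show each such triple forces a commit of $B_v$. By Lemma~\ref{lemma:voteall}, $B_v$ receives votes from all honest replicas, and those votes reach $L_{v+1}$ in the \texttt{NEW-VIEW} messages. Since $L_v$ is honest and hence non-equivocating, $\QC{B_v}$ is the unique certificate for view $v$ and is the highest QC any honest replica holds when entering view $v{+}1$; thus $L_{v+1}$ sets $B_{v+1}.qc = \QC{B_v}$. Because $\View{B_{v+1}.qc} = v = (v{+}1)-1$, the reinstating branch of \texttt{AttachMetadata} stays dormant (for $\rho \ge 2$), so $B_{v+1}$ is a plain extension of $\QC{B_v}$ — and even if it were triggered, an honest $L_{v+1}$ would attach the correct reinstate/empty-certificates, so Lemma~\ref{lemma:voteall} still applies and all honest replicas vote for $B_{v+1}$. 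Repeating the argument, $L_{v+2}$ proposes $B_{v+2}$ with $B_{v+2}.qc = \QC{B_{v+1}}$, which reaches every honest replica. Each honest replica then observes $B_v, B_{v+1}, B_{v+2}$ at consecutive views with $B_{v+2}.qc = \QC{B_{v+1}}$ and $B_{v+1}.qc = \QC{B_v}$, and by the Commit Rule commits the chain up to and including $B_v$, a block proposed by the honest leader $L_v$.

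Finally I would conclude unboundedness. Distinct triples sit at distinct views, so they yield blocks $B_v$ at pairwise distinct views, hence pairwise distinct blocks; by the safety theorem already proven these commits are mutually consistent and therefore occupy distinct positions of the global ledger. Infinitely many triples after GST thus give infinitely many distinct honest-proposed committed blocks, which is exactly the claimed liveness property.

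I expect the main obstacle to be the leader-rotation step rather than anything protocol-specific: the theorem genuinely needs "infinitely many runs of three consecutive honest leaders after GST," which is immediate for deterministic round-robin but must either be taken as an assumption on the pacemaker's leader schedule or derived in the randomized-rotation model (where it holds almost surely). The remaining work — the chaining of Lemmas~\ref{lemma:learnall} and~\ref{lemma:voteall} into a two-chain and invoking the Commit Rule, plus checking that the \Carry-specific reinstating machinery does not interfere when leaders are consecutive — is routine.
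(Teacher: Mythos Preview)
Your proposal is correct and follows essentially the same route as the paper: both argue that after GST any run of three consecutive honest leaders produces, via Lemma~\ref{lemma:voteall}, the two-chain $\QC{v}$, $\QC{v+1}$ that triggers the Commit Rule on $B_v$, and both then appeal to a pigeonhole on the round-robin rotation over $3f{+}1$ replicas to guarantee such triples recur infinitely often. You are more careful than the paper in spelling out why the \Carry reinstating branch is inert on consecutive honest views, why distinct triples yield distinct committed blocks, and in flagging the dependence on the leader-schedule assumption, but the skeleton is identical.
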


\begin{proof}
Following Lemma~\ref{lemma:voteall}, we can conclude that any sequence of three consecutive honest leaders $L_v$, $L_{v+1}$, and $L_{v+2}$ will generate two consecutive QCs: $\QC{v}$ and $\QC{v+1}$, thereby committing block $B_v$ and its prefix.

Given that the system comprises $n = 3f + 1$ replicas, such a trio of consecutive honest leaders is guaranteed to occur at least once in every sequence of $n$ views. As the protocol proceeds, this ensures an unbounded number of block commitments.
\end{proof}
}

\end{document}